\newtheorem{theorem}{Theorem}[section]
\newtheorem{lemma}[theorem]{Lemma}
\newtheorem{corollary}[theorem]{Corollary}
\newtheorem{definition}[theorem]{Definition}
\newtheorem{proposition}[theorem]{Proposition}
\newtheorem{remark}[theorem]{Remark}
\DeclareMathOperator{\C}{\mathcal{C}}
\newcommand{\fqn}{\mathbb{F}_{q^n}}
\newcommand{\fq}{\mathbb{F}_{q}}
\newcommand{\Z}{\mathbb{Z}}
\newcommand{\F}{{\mathbb F}}
\newcommand{\w}{{\mathbf w}}
\renewcommand{\mod}{\hbox{{\rm mod}\,}}
\newcommand{\PG}{\mathrm{PG}}
\newcommand{\N}{\mathrm{N}}
\newcommand{\xx}{\mathbf{x}}
\newcommand{\SSS}{\mathcal{S}}
\newcommand\qbin[3]{\left[\begin{matrix} #1 \\ #2 \end{matrix} \right]_{#3}}
\begin{document}

\title[Using multi-orbit cyclic subspace codes for OOCs]{Using multi-orbit cyclic subspace codes for constructing optical orthogonal codes}

\author[F. {\"O}zbudak]{Ferruh {\"O}zbudak}
\address{Ferruh {\"O}zbudak, \textnormal{FENS, Sabancı University, Ístanbul, Turkey}}
\email{ferruh.ozbudak@sabanciuniv.edu}

\author[P. Santonastaso]{Paolo Santonastaso}
\address{Paolo Santonastaso, \textnormal{Dipartimento di Matematica e Fisica, Universit\`a degli Studi della Campania ``Luigi Vanvitelli'', Caserta, Italy}}
\email{paolo.santonastaso@unicampania.it}

\author[F. Zullo]{Ferdinando Zullo}
\address{Ferdinando Zullo, \textnormal{Dipartimento di Matematica e Fisica, Universit\`a degli Studi della Campania ``Luigi Vanvitelli'', Caserta, Italy}}
\email{ferdinando.zullo@unicampania.it}

\begin{abstract}
We present a new application of multi-orbit cyclic subspace codes to construct large optical orthogonal codes, with the aid of the multiplicative structure of finite fields extensions. This approach is different from earlier approaches using combinatorial and additive (character sum) structures of finite fields. Consequently, we immediately obtain new classes of optical orthogonal codes with different parameters.
\end{abstract}

\maketitle

\noindent {\textbf{Keywords}: optical orthogonal code; cyclic subspace code; linearized polynomial}\\

\noindent{\textbf{MSC2020}: 94B60; 11T30; 11T71}

\section{Introduction}

In their groundbreaking paper \cite{koetter2008coding}, Koetter and Kschischang established the foundations of subspace codes, which turns out to be useful for the error correction in random network coding. Let $\F_q$ be a finite field and $m \geq 2$ be an integer. We consider all $\F_q$-subspaces of $\F_{q^m}$. If $U,V$ are two $\F_q$-subspaces of $\F_{q^m}$, then the \textbf{subspace distance} $d_s(U,V)$ is defined as 
\[
d_s(U,V)=\dim_{\F_q}(U)+\dim_{\F_q}(V)-2\dim_{\F_q}(U\cap V).
\]
In this paper we consider constant dimension $\F_q$-linear subspaces. Hence let $1 \leq k \leq m$ be a fixed integer.  
If $\dim_{\F_q}(U)=\dim_{\F_q}(V)=k$, then the subspace distance reads as $\dim_s(U,V)=2k-2\dim_{\F_q}(U \cap V)$. From now on, let $\mathcal{G}_{q}(m,k)$ denote the Grassmannian consisting of all $k$-dimensional $\F_q$-subspaces of $\F_{q^m}$. There is a natural action of the multiplicative group $\F_{q^m}^*$ on $\mathcal{G}_{q}(m,k)$: for $c \in \F_{q^m}^*$ and $U \in \mathcal{G}_{q}(m,k)$, let $cU$ denote the action of $c$ on $U$, which is defined as 
\[
cU=\{cu\colon u \in U\} \in \mathcal{G}_{q}(m,k).
\]
Therefore a \textbf{cyclic subspace code} $C$ in $\mathcal{G}_{q}(m,k)$ is a collection of elements of $\mathcal{G}_{q}(m,k)$ such that $U \in C$ implies $cU \in C$, for every $c \in \F_{q^m}^*$, see \cite{etzion2011error}. It has been an active research area to construct cyclic subspace codes with best parameters in the recent years. One of the most useful techniques in the recent years for the construction of cyclic subspace codes is via Sidon spaces \cite{roth2017construction}. Here we present an equivalent definition: $U \in \mathcal{G}_{q}(m,k)$ is a \textbf{Sidon space} of dimension $k$ in $\F_{q^m}$ if and only if 
\[
\dim_{\F_q}(U \cap \alpha U) \leq 1, \mbox{ for each }\alpha \in \F_{q^m} \setminus \F_q.
\]
This suggests a natural extension. For $r \geq 1$, let $U_1,\ldots,U_r$ be distinct elements of $\mathcal{G}_{q}(m,k)$. We say that $\{U_1,\ldots,U_r\}$ is an \textbf{$r$-multi Sidon space} if 
\[
\dim_{\F_q}(U_i \cap \alpha U_j) \leq 1,
\]
for every $\alpha \in \F_{q^m}^*$ and $1 \leq i,j \leq r$ with $i \neq j$, and for every $\alpha \in \F_{q^m} \setminus \F_q$ and $i=j$, see \cite{zullo2023multi}. In particular, multi Sidon spaces have important applications in cyclic subspace codes.  In this paper, using a multiplicative approach we obtain a new connection of multi-orbit cyclic subspace and codes and multi Sidon spaces to optical orthogonal codes. 

Given a vector $(x_0,\ldots,x_{n-1}) \in \F_{2}^n$, and for an integer $i\geq n$ we denote by $x_i:=x_{i \pmod{n}}$.

\begin{definition}
    An $(n,w,\lambda)$ \textbf{optical orthogonal code} (\textbf{OOC} for short) $\C$ is an Hamming-metric code in $\F_2^n$ having constant weight $w$, satisfying the following two properties:
    \begin{enumerate}
        \item \textbf{The autocorrelation property:} 
        \[
        \sum_{t=0}^{n-1} x_tx_{t + \tau} \leq \lambda,
        \]
        for every $(x_0,\ldots,x_{n-1}) \in \C$ and every integer $\tau$, with $0<\tau<n$.
        \item \textbf{The crosscorrelation property:}
        \[
        \sum_{t=0}^{n-1} x_ty_{t + \tau} \leq \lambda,
        \]
        for every distinct $(x_0,\ldots,x_{n-1}),(y_0,\ldots,y_{n-1}) \in \C$ and every non negative integer $\tau$.
    \end{enumerate}
\end{definition}

Since every codeword $\xx \in \C$ has Hamming weight $w$, the autocorrelation equals $w$ in the extremal case $\tau=0$. To avoid triviality, we require that $\lambda<w$. 
For the applications of OOCs in optical code-division multiple access (OCDMA) communication systems, it is desirable that an OOC have large size and small maximum correlation for a given length $n$ and a given weight $w$; see \cite{salehi1989code,salehi1989code2}.
As also pointed out in \cite{chung1989optical}, OOCs when used in OCDMA communication systems need to have the property that each codeword has many more 0’s than 1’s. In terms of parameters, for an $(n,w,\lambda)$ OOC, this means that the parameter $n$ is significantly larger than $w$.
However, there is a tradeoff among the parameters of an OOC, known as \textbf{Johnson bound}.

\begin{theorem} [see \textnormal{\cite{johnson1962new}}] \label{thm:Jonbound}
    Let $\C$ be an $(n,w,\lambda)$ OOC. Then
    \begin{equation} \label{eq:johnsonbound}
    \lvert \C \rvert \leq J(n,w,\lambda):= \left\lfloor \frac{1}{w} \left\lfloor \frac{n-1}{w-1} \left\lfloor \frac{n-2}{w-2} \left\lfloor \cdots \left\lfloor \frac{n-\lambda}{w-\lambda} \right\rfloor \cdots \right\rfloor \right\rfloor \right\rfloor \right\rfloor.
     \end{equation}
\end{theorem}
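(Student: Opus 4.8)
The plan is to pass from the analytic correlation conditions to a purely set-theoretic packing statement, and then to run the classical iterated counting argument while exploiting the cyclic symmetry to obtain the leading factor $1/w$ rather than the crude $n/w$. First I would identify each codeword $\xx=(x_0,\dots,x_{n-1})\in\C$ with its support $S_{\xx}=\{t\in\Z/n\Z : x_t=1\}$, a $w$-subset of $\Z/n\Z$; under this identification the cyclic shift by $\tau$ of $\xx$ becomes the translate $S_{\xx}+\tau$, and for any two codewords the correlation $\sum_t x_t\,y_{t+\tau}$ equals $\lvert S_{\xx}\cap(S_{\yy}-\tau)\rvert$. I would then form the family $\cB$ of all $n$ translates of all supports $S_{\xx}$, $\xx\in\C$.

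Second, I would prove that $\cB$ consists of exactly $n\lvert\C\rvert$ \emph{distinct} $w$-subsets of $\Z/n\Z$, any two of which meet in at most $\lambda$ points. Distinctness is where the hypothesis $\lambda<w$ enters: if two translates of a single $S_{\xx}$ coincided, the autocorrelation at the corresponding nonzero shift would equal $w>\lambda$, a contradiction; and if a translate of $S_{\xx}$ coincided with a translate of $S_{\yy}$ for $\xx\neq\yy$, the crosscorrelation at a suitable shift would equal $w>\lambda$. The intersection bound is then immediate, since $\lvert(S_{\xx}+a)\cap(S_{\yy}+b)\rvert=\lvert S_{\xx}\cap(S_{\yy}+b-a)\rvert$ is an autocorrelation value (when $\xx=\yy$, necessarily at a nonzero shift) or a crosscorrelation value, hence at most $\lambda$ in all cases.

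Third, I would set up the standard recursion for the maximum number $D(N,w,\mu)$ of $w$-subsets of an $N$-set with pairwise intersections at most $\mu$. Fixing a point and deleting it from the blocks that contain it produces a family of $(w-1)$-subsets of an $(N-1)$-set with pairwise intersections at most $\mu-1$; hence each point lies in at most $D(N-1,w-1,\mu-1)$ blocks. Summing these block-degrees over all $N$ points counts each block $w$ times, so $D(N,w,\mu)\le\lfloor (N/w)\,D(N-1,w-1,\mu-1)\rfloor$, and unrolling this from the base case $D(N,w,0)=\lfloor N/w\rfloor$ produces the iterated floor.

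Fourth --- and this is the step that upgrades the generic packing bound to the OOC bound --- I would invoke the cyclic invariance of $\cB$. Because $\Z/n\Z$ acts transitively on the point set and permutes $\cB$ among itself, the number of blocks through a point is independent of the point; call it $r$. Counting incidences gives $n\,r=w\,\lvert\cB\rvert=w\,n\,\lvert\C\rvert$, so $r=w\,\lvert\C\rvert$. Combining this with $r\le D(n-1,w-1,\lambda-1)\le\lfloor\frac{n-1}{w-1}\lfloor\cdots\lfloor\frac{n-\lambda}{w-\lambda}\rfloor\cdots\rfloor\rfloor$ and dividing by the integer $w$ yields $\lvert\C\rvert\le J(n,w,\lambda)$. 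The main obstacle I anticipate is the bookkeeping: proving the recursion with the floors inserted at each stage (not merely at the end) and verifying that the symmetry argument legitimately replaces $n/w$ by $1/w$ without destroying the innermost floors; the correlation-to-packing translation itself is routine once the conventions are fixed.
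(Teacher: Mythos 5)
The paper does not prove this statement at all: it is quoted verbatim from the cited reference \cite{johnson1962new}, so there is no in-paper argument to compare against. Your proof is correct and is essentially the standard derivation of the Johnson bound for OOCs. The two points that genuinely need care both check out: (i) distinctness of the $n\lvert \C\rvert$ translates follows exactly as you say from $\lambda<w$ (note that the crosscorrelation condition is imposed for $\tau=0$ as well, which is what rules out two distinct codewords having translate-equivalent supports); and (ii) the transitive $\Z_n$-action on the translate family forces every point to lie in exactly $r=w\lvert\C\rvert$ blocks, and bounding $r$ by the derived-family packing number $D(n-1,w-1,\lambda-1)$ and then dividing by the integer $w$ is precisely what replaces the crude leading factor $n/w$ by $1/w$ and reproduces the stated nesting of floors. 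Your recursion $D(N,w,\mu)\le\lfloor (N/w)\,D(N-1,w-1,\mu-1)\rfloor$ with base case $D(N,w,0)=\lfloor N/w\rfloor$ is also valid (the punctured blocks through a fixed point are pairwise distinct because the original blocks are), so the unrolled bound holds with the floors inserted at every stage, as required.
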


When $\lambda=1$, the optimal set size of an $(n,w,1)$ OOC is given by
\[
\lvert \C \rvert = \left\lfloor \frac{1}{w} \left\lfloor \frac{n-1}{w-1}  \right\rfloor \right\rfloor.
\]
There exist several constructions of optimal and large non optimal OOCs with $\lambda=1$ in the literature, obtained by relying on other areas, such as projective geometry \cite{chung1989optical,miyamoto2004optical}, finite field theory \cite{chung1990optical,weng2001optical,ding2003several,ding2004cyclotomic}, design theory \cite{abel2004some,buratti2002cyclic,buratti2010further,buratti1995powerful,yang1995optical,yang1995some} and Sidon sets \cite{ruiz2020new}.  Other techniques also involve computer algorithms such as greedy and accelerated greedy algorithms \cite{chung1989optical} and outer-product matrix algorithm \cite{charmchi2006outer}. 

When $\lambda=2$, the optimal set size of an $(n,w,2)$ OOC is given by
\[
\lvert \C \rvert = \left\lfloor \frac{1}{w} \left\lfloor \frac{n-1}{w-1} \left\lfloor \frac{n-2}{w-2} \right\rfloor \right\rfloor \right\rfloor.
\]
The first construction of optimal OOC with $\lambda=2$ are the $(p^{2h}-1,p^{h+1},2)$ OOCs having size $p^h-2$, where $p$ is any prime and $h$ is a positive integer, derived from projective geometry and constructed in \cite{chung1990optical}. Moreover, in \cite{alderson2008families}, new constructions of optimal $(n,w,2)$ OOC, with $w \in \{4,6\}$ are obtained. These constructions rely on various techniques in finite projective spaces involving hyperovals in projective planes and Singer groups. Other constructions of large OOCs for $\lambda=2$ can be found in \cite{chu2004optimal,feng2008constructions}. 
When $\lambda > 2$, it is believed that the Johnson bound \eqref{eq:johnsonbound} is not achievable. Apart from the constructions found in \cite{alderson2008geometric}, where the authors have found optimal $(\frac{q^{h+1}-1}{q-1},\lambda+1,\lambda)$ OOCs satisfying $\gcd(h+1,i)=1$ for any $1 \leq i \leq \lambda$, there are no other known constructions.
Since for applications, it is very interesting to find OOCs with a large size with respect to their parameters, a well-studied problem in the area of OOCs is dedicated to construct codes that are close to being optimal, starting from the notion of asymptotic optimality introduced in \cite{moreno1995new}. If an $(n,w,\lambda)$ OOC $\C$ having size $A(n,w,\lambda)$ satisfies 
\[
\lim_{n \rightarrow + \infty} \frac{A(n,w,\lambda)}{J(n,w,\lambda)}=1,
\]
$\C$ is referred to as an \textbf{asymptotically optimal OOC}. When $\lambda=1$, there are constructions of asymptotically optimal OOCs, see e.g \cite{moreno1995new,ruiz2020new}. For the case $\lambda=2$, in \cite{miyamoto2004optical}, the authors, by using conics on finite projective planes in the projective geometry $\PG(3,q)$, where $q$ is a prime power, constructed a family of asymptotically optimal $(q^3+q^2+q+1,q+1,2)$ OOCs having size $q^3-q^2+q$. Other asymptotically optimal constructions when $\lambda=2$ are derived in \cite{moreno1995new}. For the parameters of known asymptotically optimal OOCs with $\lambda = 2$, please consult Table 2 of \cite{alderson2008geometric}. 
By using normal rational curves in $\PG(d, q)$, in \cite{alderson2007optical}, the authors constructed a family of asymptotically optimal $(\frac{q^{\lambda+2}-1}{q-1},q+1,\lambda)$ OOCs, for any choice of $\lambda>1$ and $q>\lambda$. In \cite{alderson2007optical2}, by using  normal rational curves and Singer groups, the authors constructed asymptotically optimal OOCs generalizing and improving previous constructions of OOCs, for instance those from conics \cite{miyamoto2004optical}
and arcs \cite{alderson2007optical}. Among the various large constructions of OOCs, the authors have constructed a family of asymptotically optimal designs denoted as $(\frac{q^{h+1}-1}{q-1},q+1,\lambda)$, where $m$ and $\lambda$ are integers satisfying $h > \lambda \geq 2$, and $q$ is a prime power such that $q \geq \lambda$ (refer to \cite[Theorem 2]{alderson2007optical2}). Additional constructions are presented in \cite{alderson2008geometric,moreno1995new,alderson2007optical2,alderson2007constructions,alderson2008classes,omrani2005improved}. 

We summarize in Table \ref{table:asynoptimalOOC} the list of the known largest constructions of OOCs having $\lambda>2$. Different rows of the tables correspond to different parameters $(n,w,\lambda)$ of the OOCs.

\begin{table}[htp]
\tabcolsep=0.2 mm
\begin{tabular}{|c|c| c |c|c|}
\hline
& \mbox{Parameters} & \mbox{Conditions} & \mbox{References} \\ \hline
1) & $\left(q^2-1,q-2,3\right)$ & none & \cite[Theorem 14]{alderson2008geometric}  \\ \hline
2) & $\left(\frac{q^5-1}{q-1},q,3\right)$ & none & \cite[Corollary 4.10]{alderson2007constructions} \\ \hline
3) & $\left(\frac{q^{k(\lambda+2)}-1}{q^k-1},q-1,\lambda\right)$ & $\lambda>1$ & \cite[Theorem 20]{alderson2008classes} \\ \hline
4) & $\left(\frac{q^{(h+1)}-1}{q-1},q+1,\lambda\right)$ & $h>\lambda>1$ and $q \geq \lambda$ & \cite[Theorem 3.4]{alderson2007optical2} \\ \hline
5) & $\left(q^h-1,q-\lambda+3,\lambda\right)$ & $h>\lambda \geq 3$ and $q \geq \lambda$ & \cite[Corollary 5.6]{alderson2007optical2} \\ \hline
6) & $\left(ph,h,\lambda\right)$ & $p$ prime and $h \mid (p-1)$ & \cite[Theorem 1]{moreno1995new} \\ \hline
7) & $\left((q-1)p,(p-\lambda),\lambda \right)$ & $q=p^a$ and $1\leq \lambda \leq p-\lambda$ & \cite[Theorem 2]{moreno1995new} \\ \hline
8) & $\left(h(q+1),h,2\lambda\right)$ & $q=p^a$, $p$ prime, $h\mid (q-1)$ and $\gcd(h,q+1)=1$ & \cite[Theorem 3]{moreno1995new} \\ \hline
9) & $\left(\frac{q^{\lambda-1}-1}{q-1},q+1,\lambda\right)$ & $\lambda \geq 4$ & \cite[Theorem 11 and 12]{alderson2008geometric} \\ \hline
10) & $\left(q^a-1,q^h,q^{h-1}\right)$ & $a>h\geq 1$ & \cite[Theorem 11]{omrani2005improved} \\ \hline
11) & $(q-1,f,\lambda)$ & $f \mid (q-1)$ and $\lambda <f$ & \cite[Theorem 1]{ding2003several} $^*$ \\
\hline
12) & $(3^{13f}-1,81,3)$ & none & \cite[Proposition 3]{irimaugzi2022two} $^*$ \\
\hline 
\end{tabular}
\caption{Known largest family of OOCs with $\lambda>2$. ($q$ is a prime power) \\
(Non asympotically optimal constructions are marked with $^*$)}
\label{table:asynoptimalOOC}
\end{table}


In this paper we will show a construction of OOCs with the aid of multi-orbit cyclic subspace codes or multi-Sidon spaces.
Consequently, we construct new classes of optical orthogonal codes with new parameters. The rest of the paper is organized as follows. We explain first the connection to multiplicative structures and finite field extensions in Section \ref{sec:OOCfromextension}. Furthermore, we present our constructions of new classes of optical orthogonal codes using a connection to cyclic subspace codes and to multi-Sidon spaces in Section \ref{sec:constructionscyclic}. Finally, in Section \ref{sec:new} we show infinitely many new parameters for which we are able to construct OOCs. 

\section{OOC codes from extension fields}\label{sec:OOCfromextension}

In this section we will see how the problem of finding OOCs can be translated in terms of sets of integers and then we use this correspondence to describe OOCs in extension fields.
Let us start by observing that it is possible to establish a one-to-one correspondence between elements of $\F_{2}^n$ and subsets of $\Z_n$ in the following way:
\[
\begin{array}{lcrc}
     \theta: & \F_2^n & \longrightarrow & 2^{\Z_n}  \\
     & \mathbf{x}=(x_0,\ldots,x_{n-1}) & \longmapsto & \theta(\mathbf{x})=\{i \in \Z_n \colon x_i=1\},
\end{array}
\]
so $\theta$ is the map that associates each element of $\F_2^n$ with its support (with respect to the Hamming metric).
Note also that $\theta$ is invertible and $\w_H(\mathbf{x})=\lvert \theta(\mathbf{x}) \rvert$, where $\w_H(\mathbf{x})$ denotes the Hamming weight of $\mathbf{x}$.  For a non negative integer $\tau$, and a non-empty subset $X \subseteq \Z_n$ we define \[X+\tau:=\{a+\tau \colon a \in X \} \subseteq \Z_n.\] 

The correspondence given by $\theta$ allows us to translate the definition of an OOC in terms of subsets of $\Z_n$.

\begin{definition} \label{def:oos}
An $(n,w,\lambda)$ \textbf{Optical Orthogonal Set} (OOS) $\mathcal{S}$ is a subset of $2^{\Z_n}$ with the property that $\lvert X \rvert =w$ for every $X \in \mathcal{S}$, and such that the following two properties hold:
\begin{enumerate}
        \item The autocorrelation property: 
        \[
        \lvert X \cap (X + \tau) \rvert \leq \lambda,
        \]
        for every $X \in \SSS$ and every integer $\tau$, with $0<\tau<n$.
        \item The crosscorrelation property:
        \[
        \lvert X \cap (Y + \tau) \rvert  \leq \lambda,
        \]
        for every distinct $X,Y \in \SSS$ and every non negative integer $\tau$.
    \end{enumerate}
\end{definition}

As a consequence of the properties of $\theta$, we have the following correspondence, showing that constructing OOCs is equivalent to construct sets as defined in Definition \ref{def:oos} (OOS).

\begin{lemma} \label{lm:connOOCOOS}
    Let $\C$ be an $(n,w,\lambda)$ OOC. Then $\theta(\C)$ is an $(n,w,\lambda)$ OOS. Conversely, let $\SSS$ be an $(n,w,\lambda)$ OOS. Then $\theta^{-1}(\SSS)$ is an $(n,w,\lambda)$ OOC. 
    \end{lemma}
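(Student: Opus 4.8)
The plan is to reduce both correlation conditions to a single algebraic identity expressing a correlation sum as the cardinality of an intersection of translated supports, and then to exploit that $\theta$ is a weight-preserving bijection. First I would record the two bookkeeping facts that make the statement essentially formal: since $\w_H(\xx)=\lvert\theta(\xx)\rvert$, a collection $\C\subseteq\F_2^n$ has constant weight $w$ if and only if every member of $\theta(\C)$ has cardinality $w$; and since $\theta$ is a bijection with $\theta^{-1}(\theta(\C))=\C$ and $\theta(\theta^{-1}(\SSS))=\SSS$, both implications in the lemma follow from a single biconditional, namely that $\C$ satisfies the OOC correlation properties if and only if $\theta(\C)$ satisfies the OOS correlation properties. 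It therefore suffices to establish this biconditional.

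The heart of the argument is the observation that, writing $X=\theta(\xx)$ and $Y=\theta(\yy)$, the Boolean product $x_t y_{t+\tau}$ equals $1$ exactly when $t\in X$ and $t+\tau\in Y$, i.e. when $t\in X\cap(Y-\tau)$. Summing over $t$ therefore gives
\[
\sum_{t=0}^{n-1} x_t y_{t+\tau}=\lvert X\cap(Y-\tau)\rvert,
\]
with the autocorrelation being the special case $\yy=\xx$. This already turns every correlation inequality for $\C$ into a bound on an intersection of the form appearing in Definition \ref{def:oos}, the only discrepancy being that the translate there is $Y+\tau$ rather than $Y-\tau$.

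The single point needing care is reconciling these two translation conventions, and I expect it to be the only (minor) obstacle. Working in $\Z_n$ one has $Y-\tau=Y+(n-\tau)$ as subsets, so $\lvert X\cap(Y-\tau)\rvert=\lvert X\cap(Y+(n-\tau))\rvert$. For the autocorrelation the map $\tau\mapsto n-\tau$ is a bijection of $\{1,\dots,n-1\}$ onto itself, and for the crosscorrelation, where $\tau$ effectively ranges over all of $\Z_n$, it permutes the residues; hence the family of bounds obtained with the $-\tau$ translate is identical to the family obtained with the $+\tau$ translate. Combining this with the identity above, the autocorrelation (resp. crosscorrelation) property of $\C$ holds if and only if the corresponding property of $\theta(\C)$ holds.

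Assembling the three steps yields the biconditional, whence $\theta(\C)$ is an $(n,w,\lambda)$ OOS whenever $\C$ is an $(n,w,\lambda)$ OOC, and running the identical chain of equivalences starting from an arbitrary OOS $\SSS$ and its preimage $\theta^{-1}(\SSS)$ gives the converse, completing the proof.
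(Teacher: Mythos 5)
Your proof is correct and is exactly the argument the paper leaves implicit (the lemma is stated there without proof, as an immediate consequence of the properties of $\theta$): the identity $\sum_t x_t y_{t+\tau}=\lvert X\cap(Y-\tau)\rvert$ together with the fact that $\theta$ is a weight-preserving bijection. Your care in reconciling the $-\tau$ versus $+\tau$ translation conventions via the substitution $\tau\mapsto n-\tau$ is the right way to close the one small gap a careless reading would leave.
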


In the next sections, we will provide new families of OOCs by showing new constructions of OOSs.
To this aim, we show a procedure to construct OOSs, by using subsets of a finite field. Let $\omega$ be a primitive element of $\F_{q^m}$. Let $W$ be a nonempty subset of $\F_{q^m}$. Denote by 
\[
\SSS(W):=\{i \colon \omega^i \in W\} \subseteq \Z_N
\]
with $N=q^m-1$. Note that $\lvert S(W) \rvert=\lvert W^* \rvert$, where $W^*:=W \setminus \{0\}$ and that 
\begin{equation} \label{eq:shift=transl}
\SSS(\alpha W)=\SSS(W)+\tau,
\end{equation}
if $\alpha=\omega^{\tau}$. 

In the following result we will see how to construct OOSs, using the multiplicative properties of subsets of $\F_{q^m}$.

\begin{theorem}\label{thm:defOOS}
    Let $W_1,\ldots,W_t$ be distinct nonempty subsets of $\F_{q^m}^*$ and let $N=q^m-1$. Define
    \[
    \SSS(W_1,\ldots,W_t):=\{\SSS(W_i)\colon i\in\{1,\ldots,t\}\} \subseteq 2^{\Z_N}.
    \]
    The set $\SSS(W_1,\ldots,W_t)$ is an $(q^m-1,w,\lambda)$ OOS if and only if $\lvert W_i^* \rvert=w$ for every $i$, and 
    \begin{enumerate}
        \item for every $i\in\{1,\ldots,t\}$  and every $\alpha \in \F_{q^m} \setminus \{0,1\}$
        \[ \lvert W_i \cap \alpha W_i \rvert \leq \lambda;\]
        \item for every $i,j \in\{1,\ldots,t\}$ with $i\ne j$ and every $\alpha \in \F_{q^m}^*$
        \[ \lvert W_i \cap \alpha W_j \rvert \leq \lambda.\]
    \end{enumerate}
    In particular, $\lvert \SSS(W_1,\ldots,W_t) \rvert =t$.
\end{theorem}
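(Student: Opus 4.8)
The plan is to exploit the dictionary already established by equation \eqref{eq:shift=transl}, which tells us that scaling a subset $W \subseteq \F_{q^m}^*$ by $\alpha = \omega^\tau$ corresponds exactly to translating $\SSS(W)$ by $\tau$ in $\Z_N$. First I would note that the weight condition is immediate: since each $W_i \subseteq \F_{q^m}^*$, we have $W_i^* = W_i$, and the remark following the definition of $\SSS(W)$ gives $\lvert \SSS(W_i) \rvert = \lvert W_i^* \rvert$, so $\lvert \SSS(W_i) \rvert = w$ for all $i$ if and only if $\lvert W_i^* \rvert = w$ for all $i$. Likewise, distinctness of the $W_i$ forces distinctness of the sets $\SSS(W_i)$ (because $\theta$-type correspondences and the map $W \mapsto \SSS(W)$ are injective on subsets via the primitive element $\omega$), which yields $\lvert \SSS(W_1,\ldots,W_t)\rvert = t$.

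The core of the argument is to translate the autocorrelation and crosscorrelation inequalities of Definition \ref{def:oos} into the multiplicative conditions (1) and (2). For the autocorrelation property, fix $i$ and an integer $\tau$ with $0 < \tau < N$. Writing $\alpha = \omega^\tau$, the constraint $0 < \tau < N$ is exactly the condition $\alpha \in \F_{q^m}^* \setminus \{1\}$, i.e.\ $\alpha \in \F_{q^m}\setminus\{0,1\}$. Using \eqref{eq:shift=transl} I would rewrite
\[
\lvert \SSS(W_i) \cap (\SSS(W_i) + \tau) \rvert = \lvert \SSS(W_i) \cap \SSS(\alpha W_i) \rvert = \lvert \SSS(W_i \cap \alpha W_i) \rvert = \lvert (W_i \cap \alpha W_i)^* \rvert,
\]
where the middle equality uses that $\SSS(\cdot)$ respects intersections (an element $\omega^k$ lies in $W_i \cap \alpha W_i$ iff $k$ lies in both index sets). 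Since $W_i \cap \alpha W_i \subseteq \F_{q^m}^*$ already excludes $0$, this count equals $\lvert W_i \cap \alpha W_i\rvert$, so the autocorrelation bound $\leq \lambda$ for all valid $\tau$ is equivalent to condition (1) for all $\alpha \in \F_{q^m}\setminus\{0,1\}$. The crosscorrelation property is handled identically with $W_i$ and $W_j$ for $i \neq j$, except that now $\tau$ ranges over all nonnegative integers (including $\tau = 0$), which corresponds to $\alpha = \omega^\tau$ ranging over all of $\F_{q^m}^*$; this gives the equivalence with condition (2).

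The main obstacle, and the step that deserves the most care, is verifying that $\SSS(\cdot)$ commutes with intersection and scaling simultaneously and that the range of $\tau$ maps correctly onto the range of $\alpha$ in each of the two cases. Specifically I must check that as $\tau$ runs over $\{1,\ldots,N-1\}$ the scalar $\omega^\tau$ runs over $\F_{q^m}^*\setminus\{1\}$ bijectively (using primitivity of $\omega$ and $N = q^m-1$), and that for the crosscorrelation the inclusion of $\tau = 0$ correctly reinstates the scalar $\alpha = 1$. One subtlety is that translations are taken modulo $N$, so I would confirm that $\SSS(\alpha W) = \SSS(W) + \tau$ holds as subsets of $\Z_N$ with the cyclic convention $x_i := x_{i \bmod n}$ fixed at the start, ensuring no boundary effect arises when $\tau$ is large. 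Once these bijections between index shifts and field scalars are pinned down, both directions of the ``if and only if'' follow by reading the chain of equalities above forwards and backwards.
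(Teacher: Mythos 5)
Your proposal is correct and follows essentially the same route as the paper: both translate the auto- and crosscorrelation conditions of Definition \ref{def:oos} into the multiplicative conditions via the identity $\SSS(\alpha W)=\SSS(W)+\tau$ for $\alpha=\omega^\tau$, matching the range $\tau\in\{1,\ldots,q^m-2\}$ to $\alpha\in\F_{q^m}\setminus\{0,1\}$ for autocorrelation and $\tau\geq 0$ to $\alpha\in\F_{q^m}^*$ for crosscorrelation. Your write-up is in fact slightly more careful than the paper's, which glosses over the injectivity of $W\mapsto\SSS(W)$ and the inclusion of $\tau=0$ in the crosscorrelation case.
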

\begin{proof}
    By Definition \ref{def:oos}, $\SSS(W_1,\ldots,W_t)$ is an $(N,w,\lambda)$ if and only if every set of $\SSS(W_1,\ldots,W_t)$ has size $w$ and 
    \begin{itemize}
        \item for any $i \in \{1,\ldots,t\}$ and for any $\tau\in \{1,\ldots,q^m-2\}$ then $|\SSS(W_i)\cap(\SSS(W_i)+\tau)|=|W_i\cap \omega^\tau W_i|\leq \lambda$;
        \item for any $i,j \in \{1,\ldots,t\}$ with $i\ne j$ and for any $\tau\in \{1,\ldots,q^m-2\}$ then $|\SSS(W_i)\cap(\SSS(W_j)+\tau)|=|W_i\cap \omega^\tau W_j|\leq \lambda$,
    \end{itemize}
    and so this happens if and only if the $W_i$'s satisfy (1) and (2).
\end{proof}

In the next section we will see that if we choose the $W_i$'s as the affine spaces associated with the distinct representative of cyclic subspace codes we obtain an OOS.

\section{Constructions from cyclic subspace codes}\label{sec:constructionscyclic}

Let $C$ be a subset of $\mathcal{G}_{q}(m,k)$, we recall that the \textbf{minimum subspace distance} of a subspace code $C$ is defined as 
\[
d_s(C) := \min\{d(U,V) \colon  U,V \in C, U \neq V\}.
\]
As already seen in the Introduction, a subspace code $C \subseteq \mathcal{G}_q(m,k)$ is said to be \textbf{cyclic} if for every $\alpha \in \F_{q^m}^*$ and every $V \in C$ then $\alpha V \in C$.  
A \textbf{one-orbit (cyclic) code} is a subspace code of the form \[C_U:=\{\alpha U: \alpha \in \F_{q^m}^*\} \subseteq \mathcal{G}_{q}(m,k),\]
where $U$ is an $\F_q$-subspace of $\F_{q^m}$ such that $\dim_{\F_q}(U)=k$. Sometimes, $C_U$ is also refereed as the \textbf{orbit} of $U$. The size of $C_U$ is $(q^m-1)/(q^t-1)$, for some $t$ which divides $m$. The largest size of such a code is $(q^m-1)/(q-1)$, and codes attaining this size are called \textbf{full-length one-orbit codes}. Full-length orbit codes with distance $2k - 2$, which is the best possible, will be called \textbf{optimal full-length one-orbit codes}. These kinds of subspace codes are connected with Sidon spaces.
An $\fq$-subspace $U$ of $\fqn$ is called a \textbf{Sidon space} if the product of any two elements of $U$ has a unique factorization over $U$, up to multiplying by some elements in $\fq$.
More precisely, $U$ is a Sidon space if for all nonzero $a,b,c,d \in U$, if $ab=cd$, then 
\[ \{a\fq, b\fq\}=\{c\fq, d\fq\}, \]
where if $e\in \F_{q^m}$ then $e\fq=\{e \lambda \colon \lambda \in \fq\}$.

\begin{theorem} [see \textnormal{\cite[Lemma 34]{roth2017construction}}] \label{lem:charSidon2}
Let $U$ be an $\fq$-subspace of $\F_{q^m}$ of dimension $k$. Then $C_U$ is an optimal full-length one-orbit code if and only if $U$ is a Sidon space.
\end{theorem}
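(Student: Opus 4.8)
The claim is an iff, so I plan to prove the two directions separately, and in both cases I would translate the geometric condition "$C_U$ is an optimal full-length one-orbit code" into the arithmetic language of the Sidon property. The key observation linking the two sides is the dimension formula for the subspace distance: since all elements of $C_U$ have dimension $k$, for $\alpha \in \F_{q^m}^*$ we have $d_s(U,\alpha U) = 2k - 2\dim_{\F_q}(U \cap \alpha U)$, so the minimum distance of the orbit is $2k-2$ precisely when $\dim_{\F_q}(U \cap \alpha U) \leq 1$ for every $\alpha$ that does not stabilize $U$. Separately, the orbit $C_U$ has full length $(q^m-1)/(q-1)$ exactly when the stabilizer of $U$ in $\F_{q^m}^*$ is as small as possible, namely $\F_q^*$; equivalently $\alpha U = U$ forces $\alpha \in \F_q$. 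My first step, therefore, is to record these two equivalences so that "optimal full-length one-orbit" becomes the single combined statement
\[
\dim_{\F_q}(U \cap \alpha U) \leq 1 \quad \text{for every } \alpha \in \F_{q^m} \setminus \F_q,
\]
which is exactly the (alternative) Sidon-space condition stated in the Introduction.

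The real content is then to show that this intersection condition is equivalent to the unique-factorization definition of a Sidon space given just above the theorem. For the direction assuming $U$ is a Sidon space, I would argue by contradiction: if some $\alpha \in \F_{q^m}\setminus\F_q$ gave $\dim_{\F_q}(U \cap \alpha U) \geq 2$, then $U \cap \alpha U$ contains two $\F_q$-independent elements $c, d$, and writing $c = \alpha a$, $d = \alpha b$ with $a,b \in U$ yields the product relation $c b = \alpha a b = d a$, i.e. $cb = ad$ with all four factors in $U$. The unique-factorization property would then force $\{c\F_q, b\F_q\} = \{a\F_q, d\F_q\}$, and I would chase the two resulting cases to conclude that $\alpha \in \F_q$, contradicting the choice of $\alpha$. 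For the converse, assuming the intersection bound, I would take any relation $ab = cd$ with $a,b,c,d \in U$ nonzero and set $\alpha = c/a \in \F_{q^m}^*$; then $c = \alpha a \in \alpha U \cap U$ and, from $ab = cd$, also $b = \alpha d$ lies in the same intersection, so the two-dimensional bound forces $c, b$ to be $\F_q$-dependent (after handling the case $\alpha \in \F_q$ separately), from which $\{a\F_q,b\F_q\}=\{c\F_q,d\F_q\}$ follows.

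The main obstacle, and where I would spend the most care, is the bookkeeping in these factorization arguments: the definition quantifies over \emph{nonzero} $a,b,c,d$, the Sidon condition $\dim \leq 1$ permits a one-dimensional overlap, and one must separate the trivial case $\alpha \in \F_q$ (where $\alpha U = U$ and the intersection is all of $U$) from the genuine case $\alpha \notin \F_q$ without accidentally excluding valid factorizations. Keeping track of which pairs land inside $U \cap \alpha U$, and showing that $\F_q$-dependence of those pairs gives exactly the equality of the unordered sets $\{a\F_q,b\F_q\}$ and $\{c\F_q,d\F_q\}$, is the delicate step; everything else reduces to the dimension formula and the stabilizer count, which are routine. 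Since the statement is cited as \cite[Lemma 34]{roth2017construction}, I would also simply invoke that reference for the unique-factorization $\Leftrightarrow$ intersection equivalence if a self-contained argument becomes unwieldy.
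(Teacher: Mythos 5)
Your proposal is correct, but note that the paper itself gives no proof of this statement --- it is quoted directly from Roth, Raviv and Tamo (Lemma~34 of \cite{roth2017construction}), so there is no internal argument to compare against. Your outline is essentially the standard proof from that reference: the reduction of ``optimal full-length one-orbit'' to the single condition $\dim_{\F_q}(U\cap\alpha U)\le 1$ for all $\alpha\in\F_{q^m}\setminus\F_q$ is sound (using $d_s(\beta U,\gamma U)=d_s(U,(\gamma/\beta)U)$), and both factorization arguments go through exactly as you sketch them: in one direction the relation $cb=ad$ built from two independent vectors of $U\cap\alpha U$ forces either $\alpha\in\F_q$ or $c\F_q=d\F_q$, a contradiction either way; in the other, setting $\alpha=c/a$ places both $c$ and $b$ in $U\cap\alpha U$ and the dimension bound yields $\{a\F_q,b\F_q\}=\{c\F_q,d\F_q\}$. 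The only point worth making explicit is why the distance is exactly $2k-2$ rather than $2k$: for $k\ge 2$, picking non-proportional $u,v\in U$ and $\alpha=v/u\notin\F_q$ gives $\dim_{\F_q}(U\cap\alpha U)\ge 1$, so $2k-2$ is indeed attained and is the best possible for a full-length orbit.
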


The above connection can be extended to multi-orbit cyclic subspace codes.
Let $U_1,\ldots,U_r$ be $\fq$-subspaces of dimension $k$ in  $\F_{q^m}$ and let
\begin{equation}\label{eq:multiorbitcodes} C=\bigcup_{i \in \{1,\ldots,r\}} C_{U_i} \subseteq \mathcal{G}_{q}(m,k)
\end{equation}
with minimum distance $2k-2$ and of size $r\frac{q^m-1}{q-1}$.
These codes have the same minimum distance of the codes associated with Sidon spaces but they are larger.

The code $C$ is uniquely defined by the subspaces $U_1,\ldots,U_r$ and therefore we give the following two definitions.
The set $\{U_1,\ldots,U_r\}$ for the cyclic subspace code  \eqref{eq:multiorbitcodes} is said to be a set of \textbf{representatives} of $C$, whereas if also its minimum distance is $2k-2$ we will call it a \textbf{multi-Sidon space}, in connection with the correspondence between cyclic subspace codes with these parameters with one orbit and Sidon spaces. 

We will now show how to use cyclic subspace codes to construct OOCs. First we describe the intersection properties of affine spaces associated with the subspaces of a cyclic subspace code.

\begin{proposition}\label{prop:connOOScyclicsubspacecodes}
Let $C=\cup_{i \in \{1,\ldots,r\}}C_{U_i} \subseteq \mathcal{G}_q(m,k)$ be a cyclic subspace code with minimum distance $d$ and with the property that the $C_{U_i}$'s are $r$ distinct orbits. For every $i \in \{1,\ldots,r\}$, let $d_{i,1},\ldots,d_{i,t}$ be pairwise not $\F_q$-proportional representatives of the elements of the quotient $\F_{q^m}/U_i$, with $t=(q^{m-k}-1)/(q-1)$.
Let
\[\mathcal{A}(U_1,\ldots,U_r)=\bigcup_{i=1}^r \{U_i+d_{i,j} \colon j\in\{1,\ldots,t\}\}.\]
Then 
\[|\mathcal{A}(U_1,\ldots,U_r)|=t\cdot r,\]
and 
    \begin{enumerate}
        \item for every $V \in \mathcal{A}(U_1,\ldots,U_r)$  and every $\alpha \in \F_{q^m} \setminus \{0,1\}$
        \[ \lvert V \cap \alpha V \rvert \leq q^{k-d/2};\]
        \item for every $V_1,V_2 \in \mathcal{A}(U_1,\ldots,U_r)$ with $V_1\ne V_2$ and every $\alpha \in \F_{q^m}^*$
        \[ \lvert V_i \cap \alpha V_j \rvert \leq q^{k-d/2}.\]
    \end{enumerate}
\end{proposition}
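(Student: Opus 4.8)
The plan is to reduce everything to two ingredients: an elementary fact about intersections of affine subspaces, and the minimum-distance hypothesis on $C$. First I would record the affine intersection lemma: if $A=W+a$ and $B=W'+b$ are cosets of $\F_q$-subspaces $W,W'$ of $\F_{q^m}$, then $A\cap B$ is either empty or a single coset of $W\cap W'$, so that $\lvert A\cap B\rvert\in\{0,\,q^{\dim_{\F_q}(W\cap W')}\}$ and in particular $\lvert A\cap B\rvert\le q^{\dim_{\F_q}(W\cap W')}$. The role of the cyclicity of $C$ is that $\alpha U_i\in C$ for every $\alpha\in\F_{q^m}^*$, so whenever two direction spaces occurring below are distinct they are two distinct codewords of $C$; since $d_s(W,W')=2k-2\dim_{\F_q}(W\cap W')$ is even, the minimum distance $d$ gives $\dim_{\F_q}(W\cap W')\le k-d/2$ for distinct $W,W'\in C$, and hence the target bound $q^{k-d/2}$ through the affine lemma.

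For the cardinality, I would use that an affine subspace determines its direction space as the set of differences of its points; thus $U_i+d_{i,j}=U_{i'}+d_{i',j'}$ forces $U_i=U_{i'}$. As the orbits $C_{U_i}$ are distinct this yields $i=i'$, whence $d_{i,j}-d_{i,j'}\in U_i$, i.e. $d_{i,j}$ and $d_{i,j'}$ represent the same coset of $U_i$; by pairwise non-$\F_q$-proportionality of the chosen representatives this forces $j=j'$. So the $tr$ affine spaces are pairwise distinct and $\lvert\mathcal{A}(U_1,\dots,U_r)\rvert=t\cdot r$.

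For (1) and (2) I would split on whether the two direction spaces coincide. In (1), with $V=U_i+d_{i,j}$, the set $\alpha V=\alpha U_i+\alpha d_{i,j}$ has direction space $\alpha U_i$; if $\alpha U_i\neq U_i$ then $U_i,\alpha U_i$ are distinct members of $C$, so $\dim_{\F_q}(U_i\cap\alpha U_i)\le k-d/2$ and the affine lemma gives $\lvert V\cap\alpha V\rvert\le q^{k-d/2}$. Property (2) is handled identically: writing $V_1=U_i+d_{i,j}$ and $V_2=U_{i'}+d_{i',j'}$, if $U_i\neq\alpha U_{i'}$ then $\dim_{\F_q}(U_i\cap\alpha U_{i'})\le k-d/2$ and we are done.

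The main obstacle is the complementary case, where the two direction spaces coincide, since then the affine intersection is either empty or a full coset of size $q^k>q^{k-d/2}$, and I must rule out the latter by showing the intersection is empty. In (1) this is the case $\alpha U_i=U_i$, i.e. $\alpha$ lies in the stabilizer of $U_i$; here I would use that the orbits are full-length, so the stabilizer is exactly $\F_q^*$, giving $\alpha\in\F_q^*\setminus\{1\}$ and hence $(\alpha-1)d_{i,j}\notin U_i$ (because $\alpha-1\in\F_q^*$ while $d_{i,j}\notin U_i$), so $V$ and $\alpha V$ are disjoint cosets of $U_i$. In (2), $U_i=\alpha U_{i'}$ puts $U_{i'}$ in the orbit of $U_i$, forcing $i=i'$ and $\alpha U_i=U_i$, and the distinctness of the index pairs then forces $j\neq j'$; disjointness of $U_i+d_{i,j}$ and $U_i+\alpha d_{i,j'}$ amounts to $\alpha d_{i,j'}-d_{i,j}\notin U_i$, which holds since $\alpha\in\F_q^*$ and $d_{i,j},d_{i,j'}$ are non-$\F_q$-proportional modulo $U_i$. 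This is precisely where both the trivial-stabilizer (full-length) property and the non-proportional choice of the $d_{i,j}$ are indispensable: for a non-full-length orbit two $\F_q$-non-proportional representatives could become $\F_{q^s}$-proportional under a stabilizing $\alpha\in\F_{q^s}^*\setminus\F_q^*$, collapsing $V_1$ onto $\alpha V_2$ and breaking the bound, so I would make the $\F_q^*$-stabilizer hypothesis explicit in this step.
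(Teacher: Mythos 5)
Your argument is, at its core, the same as the paper's (the coset--intersection bound combined with the minimum--distance hypothesis applied to pairs of codewords of the cyclic code), but your way of splitting cases --- according to whether the two direction spaces coincide, rather than according to whether $\alpha\in\F_q$ or whether the orbit indices agree --- is genuinely more robust, and it exposes a real issue. The paper's proof of (1) bounds $\lvert V\cap\alpha V\rvert$ by $\lvert U_i\cap\alpha U_i\rvert\le q^{k-d/2}$ for every $\alpha\notin\F_q$, and its proof of (2) asserts that $(U_i+d_{i,j})\cap(\alpha U_i+\alpha d_{i,h})=\emptyset$ for every $\alpha\in\F_{q^m}^*$ when $j\ne h$; the first step is unjustified when $\alpha\notin\F_q$ stabilizes $U_i$ (then $U_i$ and $\alpha U_i$ are not two distinct codewords and the minimum distance gives nothing), and the second is not true for generic $\alpha$ with $\alpha U_i\ne U_i$ --- the intersection need not be empty, it is merely empty or a coset of $U_i\cap\alpha U_i$, which is exactly how your version bounds it. You are also right that an extra hypothesis is needed for part (2): as literally stated the proposition fails for non-full-length orbits. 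For instance, with $U_1=\F_{q^k}$, $k\mid m$, $k\ge 2$, the one-orbit code $C_{U_1}$ has minimum distance $d=2k$, yet for $\alpha\in\F_{q^k}^*\setminus\F_q$ one can choose representatives $d_{1,j},d_{1,j'}$ of distinct, non-$\F_q$-proportional cosets with $\alpha d_{1,j'}\equiv d_{1,j}\pmod{U_1}$, giving $\lvert V_1\cap\alpha V_2\rvert=q^k>q^{k-d/2}=1$. So making the trivial-stabilizer (full-length) hypothesis explicit is a necessary repair, not pedantry; it is harmless for the paper since all orbits in the applications (Theorem \ref{th:OOCfromsubspace} and Section \ref{sec:new}) are full-length. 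One small refinement: in part (1) you do not actually need full-length orbits, because if $\alpha U_i=U_i$ then $U_i$ is linear over the subfield $\F_q(\alpha)$, so $(\alpha-1)d_{i,j}\in U_i$ would force $d_{i,j}\in U_i$ regardless; the hypothesis is only indispensable in part (2).
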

\begin{proof}
    For any subspace $U$ in $C$ define the set of affine subspaces associated with $U$
    \[ A(U)=\{ U+a \colon a \in \F_{q^m}^* \}. \]
    It is easy to see that $|\mathcal{A}(U)|=|A(U)|/(q-1)$ and $|A(U)|=q^{m-k}-1$. Since for $U_i,U_j$, with $i \neq j$ we have that $A(U_i)\cap A(U_j)=\emptyset$, we have that
    \[
    \lvert \mathcal{A}(U_1,\ldots,U_r) \rvert = \sum_{i=1}^r \lvert \mathcal{A}(U_i) \rvert=r \frac{q^{m-k}-1}{q-1},
    \]
    that proves the first part of the assertion. For the second part, let $V\in \mathcal{A}(U_1,\ldots,U_r)$. So, $V=U_i+d_{i,j}$, for some $i \in \{1,\ldots,r\}$ and $j \in \{1,\ldots,t\}$. Let $\alpha \in \F_{q^m} \setminus \{0,1\}$. If $\alpha \notin \F_q$, then 
    \[
    \lvert V \cap \alpha V \rvert=\lvert (U_i+d_{i,j}) \cap (\alpha U_i + \alpha d_{i,j})\rvert \leq \lvert U_i \cap \alpha U_i \rvert \leq q^{k-d/2}.
    \]
    Otherwise, if $\alpha \in \F_q \setminus \{0,1\}$ then $ V \cap \alpha V= (U_i+d_{i,j}) \cap (U_i + \alpha d_{i,j})= \emptyset$. Indeed, if $ V \cap \alpha V= (U_i+d_{i,j}) \cap ( U_i + \alpha d_{i,j}) \neq \emptyset$, there exists $u,v \in U_i$ such that $u+d_{i,j}=v+\alpha d_{i,j}$, implying that $d_{i,j} \in U_i$, a contradiction to our hypothesis. This proves $(1)$. Now, let $V_1,V_2 \in \mathcal{A}(U_1,\ldots,U_r)$ with $V_1\ne V_2$. Assume, that $V_1=U_i+d_{i,j}$ and $V_2=U_i+d_{i,h}$, with $j \neq h$. If $\alpha \in \F_{q^m}^*$ then 
    \[ V_1 \cap \alpha V_2= (U_i+d_{i,j}) \cap (\alpha U_i + \alpha d_{i,h})= \emptyset.\] Indeed, if $(U_i+d_{i,j}) \cap (\alpha U_i + \alpha d_{i,h}) \neq  \emptyset$, there exists $u,v \in U_i$ such that $u+d_{i,j}=v+\alpha d_{i,h}$, implying that $d_{i,j}$ and $d_{i,h}$ are $\F_q$-proportional representative of the elements of the quotient $\F_{q^m}/U_i$, a contradiction to our hypothesis. While, if $V_1=U_i+d_{i,j_1}$ and $V_2=U_h+d_{i,j_2}$, with $i \neq h$, therefore 
    \[
    \lvert V_1 \cap \alpha V_2 \rvert=\lvert (U_i+d_{i,j_1}) \cap (\alpha U_h + \alpha d_{i,j_2})\rvert \leq \lvert U_i \cap \alpha U_h \rvert \leq q^{k-d/2}.
    \]
    This proves $(2)$ and the assertion.
\end{proof}

As a consequence we immediately obtain the following result, where we show how to construct OOCs from cyclic subspace codes. 

\begin{theorem} \label{th:OOCfromsubspace}
    Let $C=\cup_{i \in \{1,\ldots,r\}}C_{U_i} \subseteq \mathcal{G}_q(m,k)$ be a cyclic subspace code with minimum distance $d$ and with the property that the $C_{U_i}$'s are $r$ distinct orbits.
    Then $\mathcal{S}(\mathcal{A}(U_1,\ldots,U_r))$ is an OOS with size $r(q^{m-k}-1)/(q-1)$ and parameters $(q^m-1,q^k,q^{k-d/2})$.
    In particular, $\theta^{-1}(\mathcal{S}(\mathcal{A}(U_1,\ldots,U_r)))$ is an $(q^m-1,q^k,q^{k-d/2})$ OOC with size $r(q^{m-k}-1)/(q-1)$.
\end{theorem}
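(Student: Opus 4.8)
The plan is to recognize this statement as a direct combination of Proposition \ref{prop:connOOScyclicsubspacecodes} with Theorem \ref{thm:defOOS}, followed by the dictionary of Lemma \ref{lm:connOOCOOS}. I take as the family of subsets the affine spaces listed in $\mathcal{A}(U_1,\ldots,U_r)$, that is, the cosets $V = U_i + d_{i,j}$ with $i \in \{1,\ldots,r\}$ and $j \in \{1,\ldots,t\}$, where $t = (q^{m-k}-1)/(q-1)$.

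First I would check that these cosets meet the hypotheses of Theorem \ref{thm:defOOS}. Each $V = U_i + d_{i,j}$ is nonempty, has cardinality $|U_i| = q^k$, and---crucially---is contained in $\F_{q^m}^*$: indeed $0 \in V$ would force $-d_{i,j} \in U_i$, i.e.\ $d_{i,j} \in U_i$, contradicting that $d_{i,j}$ represents a nonzero class of $\F_{q^m}/U_i$. Hence $V^* = V$ and $|V^*| = q^k$ for every $V$. By the first assertion of Proposition \ref{prop:connOOScyclicsubspacecodes} the cosets are pairwise distinct and there are exactly $r \cdot t = r(q^{m-k}-1)/(q-1)$ of them, so they constitute an admissible family of distinct nonempty subsets of $\F_{q^m}^*$.

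Next I would match the correlation conditions. Setting $w = q^k$ and $\lambda = q^{k-d/2}$, conditions (1) and (2) of Theorem \ref{thm:defOOS} read $|V \cap \alpha V| \le \lambda$ for $\alpha \in \F_{q^m} \setminus \{0,1\}$ and $|V_1 \cap \alpha V_2| \le \lambda$ for distinct $V_1, V_2$ and $\alpha \in \F_{q^m}^*$. These are exactly the two inequalities established in parts (1) and (2) of Proposition \ref{prop:connOOScyclicsubspacecodes}. Thus Theorem \ref{thm:defOOS} applies and yields that $\mathcal{S}(\mathcal{A}(U_1,\ldots,U_r))$ is a $(q^m-1, q^k, q^{k-d/2})$ OOS whose size equals the number of cosets, namely $r(q^{m-k}-1)/(q-1)$. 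Finally, applying $\theta^{-1}$ and invoking Lemma \ref{lm:connOOCOOS} transports this OOS to an OOC with the same parameters and size, giving the ``In particular'' clause.

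Since every ingredient is already in place, there is no genuine obstacle here: the argument is essentially a bookkeeping match between the two lists of conditions, which is why the result is stated as an immediate consequence. The only point demanding a moment's care is verifying $0 \notin V$, which is precisely where the hypothesis that the $d_{i,j}$ are representatives of \emph{nonzero} classes of the quotient is used; without it the cosets would not lie inside $\F_{q^m}^*$ and Theorem \ref{thm:defOOS} could not be invoked.
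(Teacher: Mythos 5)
Your proposal is correct and follows exactly the paper's own route: the paper's proof likewise just invokes Proposition \ref{prop:connOOScyclicsubspacecodes} to verify the hypotheses of Theorem \ref{thm:defOOS} and then appeals to Lemma \ref{lm:connOOCOOS}. Your explicit check that $0 \notin U_i + d_{i,j}$ (so the cosets lie in $\F_{q^m}^*$ and $\lvert V^* \rvert = q^k$) is a detail the paper leaves implicit, and it is a worthwhile addition.
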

\begin{proof}
    Consider $\mathcal{A}(U_1,\ldots,U_r)$ as in Proposition \ref{prop:connOOScyclicsubspacecodes}. Note that for any $V \in \mathcal{A}(U_1,\ldots,U_r)$ we have that
    \[ |\mathcal{S}(V)|=q^k, \]
    and 
    \begin{enumerate}
        \item for every $V \in \mathcal{A}(U_1,\ldots,U_r)$  and every $\alpha \in \F_{q^m} \setminus \{0,1\}$
        \[ \lvert V \cap \alpha V \rvert \leq q^{k-d/2};\]
        \item for every $V_1,V_2 \in \mathcal{A}(U_1,\ldots,U_r)$ with $V_1\ne V_2$ and every $\alpha \in \F_{q^m}^*$
        \[ \lvert V_1 \cap \alpha V_2 \rvert \leq q^{k-d/2}.\]
    \end{enumerate}
    by Proposition \ref{prop:connOOScyclicsubspacecodes}. By Theorem \ref{thm:defOOS} we obtain the assertion.
\end{proof}

\begin{remark}
    This construction extends the idea in \cite[Section 4]{irimaugzi2022two} of using certain affine subspaces of the kernel of subspace polynomials. Indeed, in our terms, they consider optimal full-length one-orbit codes defined by the kernel of a trinomial subspace polynomial and then they prove a result of the form of Theorem \ref{th:OOCfromsubspace} for the studied cases. The problem with this approach is that, in general, it is hard to find in which extension suitable trinomials split completely; see e.g. \cite{mcguire2020some,santonastaso2022linearized}.
\end{remark}

Therefore, using large cyclic subspace codes we can construct large OOCs.
For a multi-orbit cyclic subspace code  $C \subseteq \mathcal{G}_q(m,k)$ having minimum distance $d$, the sphere-packing bound (see \cite[Theorem 2]{etzion2011error}) implies that
\[ |C|\leq \frac{\qbin{m}{k-d/2+1}{q}}{\qbin{k}{k-d/2+1}{q}}, \]
and so the number $r$ of distinct orbits in $C$ satisfies
\[ r\leq \frac{|C|}{q^m-1}\leq \frac{\qbin{m}{k-d/2+1}{q}}{(q^m-1)\qbin{k}{k-d/2+1}{q}}, \]
that is asymptotically $r$ is at most $q^{(m-k)(k-d/2+1)-m}$.
Therefore, $C$ define an $(q^m-1,q^k,q^{k-d/2})$ OOC with size at most 
\[\frac{q^{m-k}-1}{q-1} \frac{\qbin{m}{k-d/2+1}{q}}{(q^m-1)\qbin{k}{k-d/2+1}{q}}, \]
and hence asymptotically roughly $q^{(m-k)(k-d/2+1)-k}$

By using Theorem \ref{th:OOCfromsubspace}, we get that $C$  defines an $(q^m-1,q^k,q^{k-d/2})$ OOC with size $r(q^{m-k}-1)/(q-1)$. For such parameters, the Johnson bound \eqref{eq:johnsonbound} can be approximated as follows
\[J(q^m-1,q^k,q^{k-d/2}) \approx \frac{(q^{n-k})^{q^{k-\frac{d}{2}}}}{q^k},\] 
which implies that the constructions of OOCs from cyclic subspace codes cannot be either optimal or asymptotically optimal with respect to the Johnson bound.

\section{New constructions of OOCs}\label{sec:new}

In this section we will examine explicit constructions of OOCs from explicit constructions of cyclic subspace codes and we will see OOC constructions with parameters different from those in Table \ref{table:asynoptimalOOC}.
We start by describing the examples of \cite{roth2017construction,zullo2023multi}.

To this aim we need the following notation: let $\mathcal{L}_{k,q}$ denote the $\fq$-algebra of $q$-polynomials (or linearized polynomials), that is polynomials of this form
\[ \sum_{i=0}^t a_i x^{q^i} \in \F_{q^k}[x]. \]
For a linearized polynomial $f \in \mathcal{L}_{k,q}$ and an element $\xi \in \F_{q^{rk}}\setminus\F_{q^k}$ we denote
\[ W_{f,\xi}=\{x+\xi f(x) \colon x \in \F_{q^k}\}\subseteq  \F_{q^{rk}}. \]
The subset $W_{f,\xi}$ turns out to be an $\fq$-subspace of $\F_{q^{rk}}$ of dimension $k$.

\begin{theorem} [see \textnormal{\cite{roth2017construction,zullo2023multi}}] \label{thm:examplepseudo}
Let $s$ be a positive integer coprime with $k\geq 2$ and $m=2k$, let $\xi \in \fqn\setminus \F_{q^k}$ and let $f_i(x)=\mu_i x^{q^s} \in \mathcal{L}_{k}$ for $i \in \{1,\ldots,r\}$ such that $r\leq q-1$,  $\N_{q^k/q}(\mu_i)\ne\N_{q^k/q}(\mu_j)$ and $\N_{q^k/q}(\mu_i \mu_j \xi^{q^k+1})\ne 1$ for every $i\ne j$.
Then 
\[C=\bigcup_{i=1}^rC_{W_{f_i,\xi}}\] 
is a cyclic subspace code of size $r\frac{q^m-1}{q-1}$ and minimum distance $m-2$.
\end{theorem}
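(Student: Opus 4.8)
The plan is to prove that $\{W_{f_1,\xi},\dots,W_{f_r,\xi}\}$ is an $r$-multi Sidon space in the sense of the Introduction, from which the statement follows formally. Cyclicity of $C$ is automatic, being a union of $\F_{q^m}^*$-orbits. For the size I would first record that each $W_{f_i,\xi}$ is a Sidon space: since $f_i(x)=\mu_i x^{q^s}$ we have $W_{f_i,\xi}=\{x+(\mu_i\xi)x^{q^s}\colon x\in\F_{q^k}\}$ with $\mu_i\xi\in\F_{q^{2k}}\setminus\F_{q^k}$ and $\gcd(s,k)=1$, so $W_{f_i,\xi}$ is a Sidon space by the construction of \cite{roth2017construction}. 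By Theorem~\ref{lem:charSidon2} each orbit $C_{W_{f_i,\xi}}$ is then an optimal full-length one-orbit code, hence has size $(q^m-1)/(q-1)$ and its members pairwise intersect in $\F_q$-dimension at most $1$; in particular the minimum distance inside a single orbit is exactly $2k-2=m-2$, which already forces $d_s(C)\le m-2$. It remains to control the intersections coming from two \emph{different} orbits and to deduce that the $r$ orbits are distinct.

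The heart of the argument is the following reduction. As $\{1,\xi\}$ is an $\F_{q^k}$-basis of $\F_{q^{2k}}$ and $x,\mu_i\in\F_{q^k}$, an element of $W_{f_i,\xi}$ has coordinates $(x,\mu_i x^{q^s})$ in this basis. To analyse $W_{f_i,\xi}\cap\gamma W_{f_j,\xi}$ for $\gamma\in\F_{q^m}^*$ I would take an element $w=x+\xi\mu_i x^{q^s}=\gamma(y+\xi\mu_j y^{q^s})$ of the intersection and apply the nontrivial automorphism $\bar z=z^{q^k}$ of $\F_{q^{2k}}/\F_{q^k}$ to produce a second equation. The resulting $2\times2$ linear system in $(x,\mu_i x^{q^s})$ has determinant $\bar\xi-\xi\neq0$, so one solves for $x$ and $\mu_i x^{q^s}$ as $\F_{q^k}$-combinations $AY+B\mu_j Y^{q^s}$ and $CY+D\mu_j Y^{q^s}$ of $Y:=y$ and $\mu_j Y^{q^s}$, where $A,B,C,D\in\F_{q^k}$ are explicit rational expressions in $\xi,\bar\xi,\gamma,\bar\gamma$ (each is $q^k$-Frobenius-fixed, hence genuinely in $\F_{q^k}$). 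Imposing the relation defining $W_{f_i,\xi}$ — that the recovered second coordinate $CY+D\mu_j Y^{q^s}$ equals $\mu_i$ times the $q^s$-th power of the recovered first coordinate $AY+B\mu_j Y^{q^s}$ — turns membership into a single $q^s$-linearized equation
\[
P(Y)=aY^{q^{2s}}+bY^{q^s}+cY=0,\qquad a,b,c\in\F_{q^k}.
\]
The injective map $w\mapsto y$ identifies $W_{f_i,\xi}\cap\gamma W_{f_j,\xi}$ with the $\F_q$-space of roots of $P$ in $\F_{q^k}$, so that $\dim_{\F_q}(W_{f_i,\xi}\cap\gamma W_{f_j,\xi})=\dim_{\F_q}\ker P$. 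A direct computation gives $a=\mu_i\mu_j^{q^s}B^{q^s}$, $c=-C$, together with the clean identity $C/B=-\xi^{-(q^k+1)}$, whence
\[
\N_{q^k/q}(c/a)=\N_{q^k/q}\!\left(\mu_i\mu_j\,\xi^{q^k+1}\right)^{-1}.
\]

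To bound $\dim_{\F_q}\ker P$ I split into two cases. If $\gamma\in\F_{q^k}$ then $B=C=0$, so $a=c=0$ and $P(Y)=(\mu_i\gamma^{q^s}-\mu_j\gamma)Y^{q^s}$; this vanishes identically precisely when $\gamma^{q^s-1}=\mu_j/\mu_i$, which for $i\ne j$ is solvable in $\F_{q^k}^*$ iff $\N_{q^k/q}(\mu_i)=\N_{q^k/q}(\mu_j)$ and is thus excluded by the first hypothesis, giving $\ker P=\{0\}$. If $\gamma\notin\F_{q^k}$ then $a,c\ne0$ and $P$ is a separable $q^s$-linearized polynomial of $q^s$-degree $2$, so $\dim_{\F_q}\ker P\le2$. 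Writing $\sigma(z)=z^{q^s}$ and viewing $P=a\sigma^2+b\sigma+c$ as a skew polynomial over $\F_{q^k}$, any nonzero root in $\F_{q^k}$ yields a right factor $\sigma-\delta_1$, and division gives $P=a(\sigma-\delta_2)(\sigma-\delta_1)$ with $\delta_1\delta_2=c/a$. By the Hilbert~90 criterion, valid because $\gcd(s,k)=1$, the operator $\sigma-\delta$ has nonzero kernel in $\F_{q^k}$ iff $\N_{q^k/q}(\delta)=1$; a short dimension count ($\dim_{\F_q}\ker P=\dim_{\F_q}\ker(\sigma-\delta_1)+\dim_{\F_q}(\sigma-\delta_1)\ker P$, with $(\sigma-\delta_1)\ker P\subseteq\ker(\sigma-\delta_2)$) then shows $\dim_{\F_q}\ker P=2$ forces $\N_{q^k/q}(\delta_1)=\N_{q^k/q}(\delta_2)=1$, i.e. $\N_{q^k/q}(\mu_i\mu_j\xi^{q^k+1})=1$, contradicting the second hypothesis for $i\ne j$. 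Hence every cross-orbit intersection has $\F_q$-dimension at most $1$; since $1<k$ this also shows the $r$ orbits are pairwise distinct, and assembling everything gives $|C|=r(q^m-1)/(q-1)$ and $d_s(C)=m-2$.

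The main obstacle I expect is exactly the case $\gamma\notin\F_{q^k}$, namely ruling out a two-dimensional cross intersection; this is where the condition $\N_{q^k/q}(\mu_i\mu_j\xi^{q^k+1})\ne1$ enters, through the norm identity for $c/a$ and the Hilbert~90 factorization argument. I would also flag the within-orbit Sidon property as the delicate point one cannot settle by the same shortcut: for $i=j$ the analogous necessary condition $\N_{q^k/q}(\mu_i^2\xi^{q^k+1})=1$ is not excluded by any hypothesis, so a two-dimensional kernel cannot be killed by the norm of $c/a$ alone and requires the extra image condition $\ker(\sigma-\delta_2)\subseteq\operatorname{im}(\sigma-\delta_1)$. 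This is precisely why I would invoke the Sidon property of $W_{f_i,\xi}$ from \cite{roth2017construction} for the diagonal case rather than re-deriving it.
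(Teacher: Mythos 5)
The paper does not actually prove this theorem: it is imported verbatim from \cite{roth2017construction,zullo2023multi}, so there is no internal proof to compare against. Your reconstruction is essentially the standard argument from those sources: pass to coordinates with respect to the $\F_{q^k}$-basis $\{1,\xi\}$, use the conjugate $z\mapsto z^{q^k}$ to turn membership in $W_{f_i,\xi}\cap\gamma W_{f_j,\xi}$ into the kernel of a $\sigma$-polynomial $a\sigma^2+b\sigma+c$ with $\sigma\colon z\mapsto z^{q^s}$ acting on $\F_{q^k}$, and then use the right-factorization $a(\sigma-\delta_2)(\sigma-\delta_1)$ together with Hilbert 90 (valid since $\gcd(s,k)=1$) to show that a two-dimensional kernel forces $\N_{q^k/q}(c/a)=1$. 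Your computation of the invariants is right: $C/B=-\xi^{-(q^k+1)}$ and $\N_{q^k/q}(c/a)=\N_{q^k/q}(\mu_i\mu_j\xi^{q^k+1})^{-1}$, so the second hypothesis kills two-dimensional cross-orbit intersections for $\gamma\notin\F_{q^k}$, while the first hypothesis handles $\gamma\in\F_{q^k}^*$. This, together with full-length orbits and $1<k$, gives the size and the distance bound across orbits. (Two minor points you should still address: the bound $\dim_{\F_q}\ker(a\sigma^2+b\sigma+c)\le 2$ needs the remark that $1,\sigma,\sigma^2$ are distinct automorphisms, which fails for $k=2$ where $\sigma^2=\mathrm{id}$ and that case must be treated separately; and you should say explicitly why a full orbit has length $(q^m-1)/(q-1)$, namely that a nontrivial stabilizer would force $\dim(W\cap\gamma W)=k\ge 2$.)

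The one genuine gap is exactly the point you flag and then try to outsource: the diagonal case $i=j$, $\gamma\notin\F_{q^k}$. Your proposed fix --- ``invoke the Sidon property of $W_{f_i,\xi}$ from \cite{roth2017construction}'' --- does not close it, because that Sidon property is not unconditional: it requires $\N_{q^k/q}\bigl((\mu_i\xi)^{q^k+1}\bigr)=\N_{q^k/q}(\mu_i^2\xi^{q^k+1})\ne 1$, which is precisely the missing diagonal instance of the second hypothesis (equivalently, the constant term of the minimal polynomial of $\mu_i\xi$ over $\F_{q^k}$ must not be a $(q-1)$-th power, the condition appearing in Corollary~\ref{cor:subspacecodesidon}). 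As the theorem is printed, with ``for every $i\ne j$'' governing both norm conditions, the claim already fails for $r=1$ with a badly chosen $\xi$; in the source the condition $\N_{q^k/q}(\mu_i\mu_j\xi^{q^k+1})\ne 1$ is imposed for all pairs $i,j$, including $i=j$. With that (intended) reading your own machinery handles the diagonal case word for word like the off-diagonal one --- the necessary condition for a two-dimensional kernel becomes $\N_{q^k/q}(\mu_i^2\xi^{q^k+1})=1$ and is excluded --- and no appeal to an external Sidon-space result is needed. So the correct conclusion is not that the diagonal case needs a different argument, but that the hypothesis needs to be read (or restated) for all $i,j$.
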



In the next corollary, which is from \cite[Construction 37 and Lemma 38]{roth2017construction}, the authors show a possible value for the number $r$ of orbits and a possible choice of the $\mu_i$'s.

\begin{corollary} \label{cor:subspacecodesidon}
For a prime power $q \geq 3$ and a positive integer $k\geq 2$, let $w$ be a primitive element of $\F_{q^k}$ and let $s$ be a positive integer such that $\gcd(s,k)=1$. Let $b \in \F_{q^k}$ be such that the polynomial $p(x)=x^2+bx+w$ is irreducible over $\F_{q^k}$ and $w$ is not a $(q-1)$-power in $\F_{q^k}$ (such a $b$ always exist). For $m=2k$, let $\xi\in \F_{q^m}$ be a root of $p(x)$. For $i \in \{0,1,\ldots,r-1\}$, where $r=\lfloor(q-1)/2 \rfloor$, let $\xi_i=w^i\xi$ and let 
$$V_i=W_{x^{q^s},\xi_i}=\{u+u^{q^s} \xi_i:u \in \F_{q^k}\}.$$
The set 
\[G_{m,s}=\bigcup_{i \in \{0,1,\ldots,r-1\}} C_{V_i}\subseteq \mathcal{G}_q(m,k)\] 
is a subspace code of size $r \cdot (q^m-1)/(q-1)$ and minimum distance $n-2$.
\end{corollary}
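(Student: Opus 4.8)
The plan is to recognize this statement as a special case of Theorem \ref{thm:examplepseudo} and then verify its two arithmetic hypotheses. First I would put each $V_i$ into the template form $W_{f_i,\xi}$. Since $\xi_i=w^i\xi$, we have
\[
V_i=\{u+w^i\xi\,u^{q^s}:u\in\F_{q^k}\}=W_{f_i,\xi},
\]
using the \emph{single} field element $\xi$ together with the linearized polynomials $f_i(x)=\mu_i x^{q^s}$, where $\mu_i=w^i\in\F_{q^k}$ for $i\in\{0,\ldots,r-1\}$. The structural hypotheses of Theorem \ref{thm:examplepseudo} are then immediate: $\gcd(s,k)=1$ and $m=2k$ are given, $r=\lfloor(q-1)/2\rfloor\le q-1$, and since $p(x)$ is irreducible of degree $2$ over $\F_{q^k}$ its root $\xi$ lies in $\F_{q^{2k}}\setminus\F_{q^k}=\F_{q^m}\setminus\F_{q^k}$. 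So it remains only to check conditions (1) and (2) of that theorem for the choice $\mu_i=w^i$.

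The single fact that drives both conditions is that, because $w$ is a primitive element of $\F_{q^k}$, its norm $\N_{q^k/q}(w)=w^{(q^k-1)/(q-1)}$ is a primitive element of $\F_q^*$, i.e.\ has order exactly $q-1$. (For the same reason $w$ is not a $(q-1)$-th power, the $(q-1)$-th powers forming a proper subgroup of index $q-1\ge 2$.) Condition (1) follows at once: $\N_{q^k/q}(\mu_i)=\N_{q^k/q}(w)^{\,i}$, and these values are pairwise distinct for $i\in\{0,\ldots,r-1\}$ since the exponents are distinct residues modulo $q-1$.

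For condition (2) the key observation is that $\xi$ and its $\F_{q^k}$-conjugate $\xi^{q^k}$ are precisely the two roots of $p(x)=x^2+bx+w$, so by Vieta's formula their product is the constant term, $\xi^{q^k+1}=\xi\cdot\xi^{q^k}=w$. Hence $\mu_i\mu_j\xi^{q^k+1}=w^{i}w^{j}w=w^{\,i+j+1}$ and $\N_{q^k/q}(\mu_i\mu_j\xi^{q^k+1})=\N_{q^k/q}(w)^{\,i+j+1}$, which equals $1$ if and only if $(q-1)\mid(i+j+1)$. For distinct $i,j\in\{0,\ldots,r-1\}$ one has $2\le i+j+1\le 2r-2$, and since $2r\le q-1$ (with equality when $q$ is odd), it follows that $0<i+j+1\le q-3<q-1$; thus $i+j+1$ is never a nonzero multiple of $q-1$ and condition (2) holds. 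Applying Theorem \ref{thm:examplepseudo} then yields that $G_{m,s}=\bigcup_i C_{V_i}$ is a cyclic subspace code of size $r(q^m-1)/(q-1)$ and minimum distance $m-2$, which is the assertion (the ``$n-2$'' in the statement being $m-2$). The existence of a suitable $b$ I would dispose of with a short remark rather than a computation: in odd characteristic one needs $b^2-4w$ to be a nonsquare, which is possible because nonsquares are abundant, and in even characteristic one needs an absolute-trace condition on $w/b^2$.

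I expect the only genuinely delicate point to be the range analysis underlying condition (2): one must pin down that the largest relevant exponent $i+j+1$ stays strictly below $q-1$, which hinges on the exact value $r=\lfloor(q-1)/2\rfloor$ together with a separate glance at the two parities of $q$ (and at the degenerate small cases $q\in\{3,4\}$, where $r=1$ and there are simply no cross pairs to verify). Everything else is a direct translation into, and citation of, Theorem \ref{thm:examplepseudo}.
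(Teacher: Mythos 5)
Your proposal is correct and follows exactly the route the paper intends: the corollary is the specialization of Theorem \ref{thm:examplepseudo} to the choice $\mu_i=w^i$ with the single element $\xi$, and your verification of the two norm conditions --- using that $\N_{q^k/q}(w)=w^{(q^k-1)/(q-1)}$ has order exactly $q-1$ in $\F_q^*$, and that $\xi^{q^k+1}=\xi\cdot\xi^{q^k}=w$ by Vieta applied to $p(x)=x^2+bx+w$ --- is precisely the check the paper leaves to the cited reference. The range analysis $0<i+j+1\le 2r-2\le q-3<q-1$ is indeed the one delicate point, and you handle it correctly for both parities of $q$ (including the vacuous cases $q\in\{3,4\}$ where $r=1$).
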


We can now use the above construction in Theorem \ref{th:OOCfromsubspace} to construct OOSs and OOCs (cf. Theorem \ref{thm:examplepseudo}).

\begin{corollary}\label{cor:constOOC2k}
Let $s$ be a positive integer coprime with $k\geq 2$ and $m=2k$, let $\xi \in \fqn\setminus \F_{q^k}$ and let $f_i(x)=\mu_i x^{q^s} \in \mathcal{L}_{k}$ for $i \in \{1,\ldots,r\}$ such that $r\leq q-1$,  $\N_{q^k/q}(\mu_i)\ne\N_{q^k/q}(\mu_j)$ and $\N_{q^k/q}(\mu_i \mu_j \xi^{q^k+1})\ne 1$ for every $i\ne j$.
Then 
\[C=\bigcup_{i=1}^rC_{W_{f_i,\xi}}.\] 
Consider now the family
\[\mathcal{A}=\mathcal{A}(W_{f_1,\xi},\ldots,W_{f_r,\xi}).\]
Then $\mathcal{S}(\mathcal{A})$ is an OOS with parameters $(q^m-1,q^k,q)$ of size $r(q^{k}-1)/(q-1)$ and\\ $\theta^{-1}(\mathcal{S}(\mathcal{A}(U_1,\ldots,U_r)))$ is an $(q^{2k}-1,q^k,q)$ OOC with size $r(q^{k}-1)/(q-1)$.\\
Choosing $\xi$ and the $\mu_i$'s as in Corollary \ref{cor:subspacecodesidon}, we have an $(q^{2k}-1,q^k,q)$ OOC with size $\lfloor(q-1)/2 \rfloor(q^{k}-1)/(q-1)$.
\end{corollary}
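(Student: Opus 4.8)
The plan is to apply Theorem \ref{th:OOCfromsubspace} directly to the cyclic subspace code $C$ produced by Theorem \ref{thm:examplepseudo}, since the present hypotheses are verbatim those of that theorem. First I would invoke Theorem \ref{thm:examplepseudo} (with $s$ coprime to $k$, $m=2k$, and the two norm conditions $\N_{q^k/q}(\mu_i)\neq\N_{q^k/q}(\mu_j)$ and $\N_{q^k/q}(\mu_i\mu_j\xi^{q^k+1})\neq 1$) to conclude that $C=\bigcup_{i=1}^r C_{W_{f_i,\xi}}$ is a cyclic subspace code of size $r(q^m-1)/(q-1)$ and minimum distance $d=m-2=2k-2$. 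The size being exactly $r(q^m-1)/(q-1)$ already forces each orbit $C_{W_{f_i,\xi}}$ to be full-length of size $(q^m-1)/(q-1)$ and forces the $r$ orbits to be pairwise distinct; this is precisely the ``$r$ distinct orbits'' hypothesis needed to invoke Theorem \ref{th:OOCfromsubspace}.

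Next I would feed $C$ into Theorem \ref{th:OOCfromsubspace} and substitute the numerical values. With $m=2k$ and $d=2k-2$ one has $d/2=k-1$, so the correlation parameter is $q^{k-d/2}=q^{k-(k-1)}=q$; the length is $q^m-1=q^{2k}-1$; the weight is $q^k$; and the size is $r(q^{m-k}-1)/(q-1)=r(q^k-1)/(q-1)$. Theorem \ref{th:OOCfromsubspace} then yields that $\mathcal{S}(\mathcal{A})$, with $\mathcal{A}=\mathcal{A}(W_{f_1,\xi},\ldots,W_{f_r,\xi})$, is an OOS with parameters $(q^{2k}-1,q^k,q)$, and that $\theta^{-1}(\mathcal{S}(\mathcal{A}))$ is the associated $(q^{2k}-1,q^k,q)$ OOC, both of size $r(q^k-1)/(q-1)$, as claimed.

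For the final assertion I would specialize the choices of $\xi$ and the $\mu_i$'s to the explicit ones supplied by Corollary \ref{cor:subspacecodesidon}, which (for $q\geq 3$ and $k\geq 2$) furnishes a valid instance of Theorem \ref{thm:examplepseudo} with $r=\lfloor(q-1)/2\rfloor$ orbits. Substituting this value of $r$ into the size formula $r(q^k-1)/(q-1)$ gives the advertised $(q^{2k}-1,q^k,q)$ OOC of size $\lfloor(q-1)/2\rfloor(q^k-1)/(q-1)$.

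Because every claim is a direct specialization of results already proved, I do not expect a genuine obstacle. The only point deserving a moment's care is confirming that the hypotheses of Corollary \ref{cor:subspacecodesidon} (existence of $b$ making $p(x)=x^2+bx+w$ irreducible over $\F_{q^k}$ with $w$ not a $(q-1)$-power) are compatible with those of Theorem \ref{thm:examplepseudo}, so that the chosen $r=\lfloor(q-1)/2\rfloor$ genuinely satisfies $r\leq q-1$ together with the two norm inequalities. Since this compatibility is exactly what Corollary \ref{cor:subspacecodesidon} establishes, the verification is immediate and the corollary follows.
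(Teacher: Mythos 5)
Your proposal is correct and matches the paper's intent exactly: the corollary is presented there as an immediate consequence of Theorem \ref{th:OOCfromsubspace} applied to the code of Theorem \ref{thm:examplepseudo} (with the explicit instance from Corollary \ref{cor:subspacecodesidon}), and your substitution $d=2k-2$, hence $q^{k-d/2}=q$, together with the observation that the size $r(q^m-1)/(q-1)$ forces $r$ distinct full-length orbits, is precisely the verification needed.
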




We can do a case-by-case analysis on the parameters of the OOCs constructed in the above proposition with those in Table \ref{table:asynoptimalOOC} and we can show that the parameters of our constructions are new.

\begin{proposition}
    The parameters of the OOCs constructed in Corollary \ref{cor:constOOC2k} are new.
\end{proposition}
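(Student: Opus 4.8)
The plan is to compare the triple $(n,w,\lambda)=(q^{2k}-1,\,q^{k},\,q)$ produced by Corollary \ref{cor:constOOC2k} against each of the twelve rows of Table \ref{table:asynoptimalOOC}, and to show that no admissible choice of the free parameters in any row reproduces it. I would begin by recording three structural invariants of our parameters that drive every case: (i) the weight is a proper prime power, $w=\lambda^{k}$ with $k\geq 2$; (ii) $n+1=w^{2}=\lambda^{2k}$; and, most usefully, (iii) $\gcd(w,n)=\gcd(q^{k},q^{2k}-1)=1$, so $w\nmid n$ while $w>1$ (indeed any prime dividing $q^{k}$ divides $q$, and $q^{2k}-1\equiv -1$ modulo such a prime). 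In every row I would match the correlation parameter first, then the weight, and only then test the length.

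The coprimality invariant (iii) immediately dispatches the rows whose length is a multiple of the weight. Matching correlation and weight, rows 6, 8 and 11 all force $w\mid n$, contradicting $\gcd(w,n)=1$ with $w>1$ (in row 8 one first notes that its correlation $2\lambda'$ can equal $q$ only when $q$ is even). Row 7 is instead closed by a primality obstruction: matching forces the prime $p=q^{k}+q=q\,(q^{k-1}+1)$, which is composite because both factors exceed $1$ for $k\geq 2$. Finally, rows 1, 2 and 12 have fixed correlation $3$, so a match requires $q=3$; rows 1 and 12 then die by a one-line check (in row 1, matching the length gives base $3^{k}$ but weight $3^{k}-2$; in row 12 the weight $81$ forces $k=4$, after which $13f=8$ has no integral solution), while row 2 dies because its length $1+3^{k}+\cdots+3^{4k}$ vastly exceeds $3^{2k}-1$.

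Rows 3, 4 and 9 have length of repunit type $\frac{Q^{N}-1}{Q-1}=1+Q+\cdots+Q^{N-1}$, and row 5 has length $Q^{h}-1$; I would close all four by a size estimate. Matching the weight forces the base $Q$ to lie within $1$ of $q^{k}$ (it equals $q^{k}+1$, $q^{k}-1$, $q^{k}$, or $q^{k}+q-3$, so $Q\geq q^{k}-1$), while the side conditions force a large exponent: $N=\lambda+2\geq 5$ in rows 3 and 4, $h\geq 4$ in row 5, and $N=\lambda-1\geq 3$ in row 9 (since $\lambda\geq 4$ there). Since $(q^{k}-1)^{3}>q^{2k}-1$ for $q\geq 3$ and $k\geq 2$, and the repunit length exceeds $Q^{N-1}$ while row 5's length is at least $Q^{h}-1\geq q^{4k}-1$, this settles rows 3, 4, 5 and row 9 with $q\geq 5$ in one stroke. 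The sole tight instance is row 9 with $q=4$ (exponent $2$), which I would finish by the explicit identity that its length equals $q^{2k}-q^{k}+1\neq q^{2k}-1$.

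I expect the main obstacle to be row 10, whose shape $(Q^{a}-1,\,Q^{h},\,Q^{h-1})$ is the only one that can simultaneously meet all three of our parameters, so the elementary divisibility and growth arguments above do not apply. Solving the matching system $Q^{h-1}=q$, $Q^{h}=q^{k}$, $Q^{a}=q^{2k}$, I would divide the first two to get $Q=q^{k-1}$, whence $(k-1)(h-1)=1$ and therefore $k=h=2$, $Q=q$, $a=4$. Thus for every $k\geq 3$ the triple is genuinely absent from the table, whereas at $k=2$ the triple $(q^{4}-1,q^{2},q)$ does coincide with a row-10 parameter set. The crux is then to separate this collision by a size count rather than by the triple $(n,w,\lambda)$ alone: I would compare our size $\lfloor (q-1)/2\rfloor (q+1)$ with the size attained by \cite[Theorem 11]{omrani2005improved} at $(q^{4}-1,q^{2},q)$ and verify that they differ, so that our construction still contributes a new data point. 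This $k=2$ versus row-10 comparison is the delicate step, being the only place where the length--weight--correlation triple alone is insufficient to establish newness.
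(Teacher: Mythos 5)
Your proposal is correct and follows the same overall strategy as the paper, which proves this proposition ``in a more general context'' via the case-by-case comparison of $(q^m-1,q^k,q)$ against the rows of Table \ref{table:asynoptimalOOC} carried out in the proof of the proposition on Table \ref{table:newconstruction}. The individual case arguments differ in flavour: where the paper reduces the length equation modulo $q$ or $q^k$, you use the invariant $\gcd(q^k,q^{2k}-1)=1$ to kill the rows whose length is a multiple of the weight, a primality obstruction for row 7, and growth estimates on the repunit-type lengths for rows 3, 4, 5 and 9; both routes work, and yours has the added benefit of disposing of row 12, which the paper's eleven-item enumeration omits. More importantly, you correctly isolate the one genuine collision: at $k=2$ the triple $(q^4-1,q^2,q)$ coincides with row 10 taking $a=4$, $h=2$, which is precisely why the paper states its general proposition only for $k>2$; since Corollary \ref{cor:constOOC2k} permits $k=2$, the paper's proof does not actually cover the full statement, and your plan to separate the $k=2$ instance by comparing sizes with the construction of \cite{omrani2005improved} (flagged but not executed, reasonably, since that size is not available to you) is the honest way to patch it. The one point to tighten is that ``the parameters are new'' cannot literally hold at $k=2$, so the statement should either be restricted to $k\geq 3$ or reformulated in terms of the size achieved for a given parameter triple.
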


We will prove the above proposition in a more general context in a few lines.

We can use the largest known constructions of cyclic subspace codes to construct large OOCs. 
In Table \ref{table:constructionscyclic} we resume the known parameters of known cyclic subspace codes. We report the family of cyclic subspace codes having the largest number $r$ of orbits, with respect to their parameters.

\begin{table}[htp]
	\centering
	\medskip
\small
\begin{tabular}{|c|c|c|c|} 
	\hline
&	Parameters & $r$ & References\\ 
	\hline
1 & $m=2k, q>2$ & $r_1=\left\lfloor \frac{q-1}{2}\right\rfloor$ & \cite[Lemma 38]{roth2017construction} \\
\hline
2& $m=ak,a>2$ & $r_2=(k-1)(q-1)\frac{q^{(e+1)k}-1}{q^k-1}, $ & \cite[Theorem 3.3]{zhang2024large} \\
& & $e=\left(\left\lceil \frac{m}{2k}\right\rceil-2\right)$& \\
\hline 
3 & $m=ak,a>2$ & $r_3=q^k(q-1)\frac{q^{(e+1)k}-1}{q^k-1},$ & \cite[Theorem 3.5]{zhang2024large} \\
& & $e=\left(\left\lceil \frac{m}{2k}\right\rceil-2\right)$& \\
\hline
4&$m \geq 6$ \mbox{ and }$k \leq \left\lfloor \frac{m-2}{4}\right\rfloor$ & 1 & \cite[Theorem 19]{roth2017construction} \\
\hline 
5 & $m \geq \binom{k+1}{2}$ & 1 & \cite[Theorem 31]{roth2017construction} \\
\hline
6 & $m =ak' $ with $a >6$ and $k=2k'$ & 1 & \cite[Theorem 3.1]{zhang2023new} \\
\hline
7 & $m \mid (q-1)$ and $k \leq \left\lfloor \frac{m+1}{4}\right\rfloor$ & 1 & \cite[Theorem 3.8]{zhang2022new} \\
\hline
8 & $m=ak(k-1)+1,$ $ k-1$ power of $p$ and $a>1$ & 1 & \cite[Theorem 7.8]{santonastaso2022linearized} \\
\hline
9 & $m=a(k^2-1)$, $ k$ and $q$ power of $2$ and $a>1$ & 1 & \cite[Theorem 7.9]{santonastaso2022linearized} \\
\hline
\end{tabular}
\caption{Table with known parameters of cyclic subspace codes}\label{table:constructionscyclic}
\end{table}

Combining Theorem \ref{th:OOCfromsubspace} and the constructions of cyclic subspace codes in Table \ref{table:constructionscyclic}, we show new examples of OOCs.

\begin{theorem}
    The cyclic subspace codes in Table \ref{table:constructionscyclic} produce OOCs with parameters as in Table \ref{table:newconstruction}.
\end{theorem}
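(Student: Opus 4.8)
The plan is to apply Theorem~\ref{th:OOCfromsubspace} separately to each row of Table~\ref{table:constructionscyclic}. Recall that this theorem turns any cyclic subspace code $C=\bigcup_{i=1}^r C_{U_i} \subseteq \mathcal{G}_{q}(m,k)$ consisting of $r$ distinct orbits with minimum distance $d$ into an OOC of length $q^m-1$, weight $q^k$, correlation $q^{k-d/2}$, and size $r(q^{m-k}-1)/(q-1)$. Hence for each row it suffices to read off $q$, $m$, $k$, the number of orbits $r$, and the minimum distance $d$ of the corresponding code, and then substitute these values into the formulas of Theorem~\ref{th:OOCfromsubspace}. The conditions listed in the second column of Table~\ref{table:constructionscyclic} are exactly the hypotheses guaranteeing the existence of the cited code, and they carry over verbatim as the conditions for the resulting OOC.

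The key point is that every construction listed in Table~\ref{table:constructionscyclic} is a cyclic subspace code with minimum distance exactly $d=2k-2$, which is the maximal possible value for a constant-dimension cyclic code in $\mathcal{G}_{q}(m,k)$ made of distinct orbits. Indeed, the single-orbit rows arise from Sidon spaces, so by Theorem~\ref{lem:charSidon2} the orbit is an optimal full-length one-orbit code of distance $2k-2$; the multi-orbit rows arise from multi-Sidon spaces, which by definition are collections of $r$ distinct orbits still achieving minimum distance $2k-2$. Substituting $d=2k-2$ gives $q^{k-d/2}=q^{k-(k-1)}=q$, so in every case the correlation parameter is $\lambda=q$, the weight is $q^k$, and the length is $q^m-1$; only the size changes from row to row through the value of $r$.

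Carrying out the substitution then yields Table~\ref{table:newconstruction}: Row~1 reproduces the $(q^{2k}-1,q^k,q)$ OOC of Corollary~\ref{cor:constOOC2k} of size $\lfloor(q-1)/2\rfloor(q^k-1)/(q-1)$, while Rows~2 and~3 (with $m=ak$) give $(q^{ak}-1,q^k,q)$ OOCs of the larger sizes $r_2(q^{(a-1)k}-1)/(q-1)$ and $r_3(q^{(a-1)k}-1)/(q-1)$, and the remaining single-orbit rows give $(q^m-1,q^k,q)$ OOCs of size $(q^{m-k}-1)/(q-1)$ under their respective conditions on $m$ and $k$. The only real work is bookkeeping: one must verify for each cited theorem that the quoted number $r$ indeed counts \emph{distinct} orbits and that the minimum distance is $2k-2$ and not smaller, since both facts feed directly into the size and the correlation of the resulting OOC. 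Once these two facts are extracted from the references, the parameters follow by direct computation with no further argument, so this step is the main (and essentially the only) obstacle.
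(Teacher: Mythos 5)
Your proposal is correct and follows essentially the same route as the paper, which proves this result simply by applying Theorem \ref{th:OOCfromsubspace} row by row to the codes of Table \ref{table:constructionscyclic}, using that each cited construction consists of $r$ distinct orbits with minimum distance $2k-2$, so that $\lambda=q^{k-d/2}=q$. Note that your weight $q^k$ agrees with Theorem \ref{th:OOCfromsubspace} and with the paper's later newness argument, so the entries $q^k-1$ in the second column of Table \ref{table:newconstruction} appear to be a typographical slip rather than a discrepancy in your reasoning.
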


The $i$-th row of Table \ref{table:constructionscyclic} is obtained by applying Theorem \ref{th:OOCfromsubspace} to the $i$-th row of Table \ref{table:newconstruction}.
We will now show that we obtain infinitely many new parameters.

\begin{proposition}
    The OOCs with parameters as in Table \ref{table:newconstruction} are new for any $k >2$ and $q \geq 3$.
\end{proposition}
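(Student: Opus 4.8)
The plan is to prove newness by a systematic comparison of the parameter tuples $(n,w,\lambda)$ arising from Table~\ref{table:newconstruction} against each of the twelve families listed in Table~\ref{table:asynoptimalOOC}. The OOCs produced by Theorem~\ref{th:OOCfromsubspace} all have the shape $(q^m-1,\,q^k,\,q^{k-d/2})$, and for the specific constructions of Table~\ref{table:constructionscyclic} the minimum distance is $d=m-2$, so $\lambda=q^{k-(m-2)/2}$. For the families with $m=ak$ this gives a very rigid arithmetic relationship among $n$, $w$, and $\lambda$: the length is $n=q^{ak}-1$, the weight is $w=q^k$, so $n+1=w^a$ is a perfect power of $w$, while $\lambda$ is also a power of $q$. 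The core of the argument is that these divisibility and ``perfect-power'' constraints are incompatible with the constraints implicit in each competing family.

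First I would fix $k>2$ and $q\ge 3$ and record the exact triple $(n,w,\lambda)=(q^{ak}-1,q^k,q^{k-d/2})$ for each row of Table~\ref{table:newconstruction}. The main structural invariants I would extract are: (i) $w=q^k$ is a prime power exceeding $q^2$; (ii) $n=q^{ak}-1$, so $w\mid (n+1)$ and in fact $n+1$ is a power of $w$; and (iii) $\lambda$ is a power of $q$ strictly smaller than $w$. Then, proceeding family by family in Table~\ref{table:asynoptimalOOC}, I would assume for contradiction that some competing triple equals one of ours and derive an impossibility. For instance, for a family whose weight is forced to be of the form $q'\pm c$ (rows with $w=q-2$, $q$, $q+1$, $q-\lambda+3$, etc.), equating $w=q^k$ forces $q^k$ to be within a bounded additive distance of a prime power $q'$, which together with matching $n=q^{ak}-1$ pins down the base and exponent and then contradicts the side conditions on $\lambda$; for the design-type rows (those with $n$ of the form $ph$, $(q-1)p$, $h(q+1)$, $q-1$) I would match $n$ first and show the resulting weight cannot be a nontrivial prime power $q^k$ with $k>2$ while simultaneously satisfying the stated gcd or divisibility hypotheses.

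The key mechanism I would lean on is the classification of solutions to $a^x - b^y = c$ for small $c$ (Catalan/Mih\u{a}ilescu-type and more elementary gcd arguments), applied to the requirement that $n+1=q^{ak}$ must simultaneously be writable in the form dictated by a competing family (e.g. $\frac{q'^{s}-1}{q'-1}+1$, or $p^{2h}$, or $q'^h$). Because our $\lambda=q^{k-d/2}>1$ whenever $k>d/2$, the competing families with $\lambda=1$ or the extremal-Johnson families are immediately excluded on the $\lambda$ coordinate, so the real work concentrates on the handful of rows allowing general $\lambda>2$ and prime-power-like weights. The hypotheses $k>2$ and $q\ge 3$ are exactly what rules out the degenerate coincidences (such as $q^k$ accidentally equalling $q+1$ or $q-1$) that can occur for $k\le 2$ or $q=2$.

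The hard part will be the handful of families in which both $n$ and $w$ have the same broad ``$(q^m-1)$ with prime-power weight'' flavour as ours, namely the rows derived from normal rational curves and Singer groups where $n=\frac{q'^{e}-1}{q'-1}$ and $w=q'+1$ or $w=q'-1$. There one cannot separate the families on a single coordinate, and I would need to argue that matching $n=q^{ak}-1$ to $\frac{q'^{e}-1}{q'-1}$ forces $q'-1\mid 1$ in the relevant regime, i.e. essentially $q'=q^{ak}$ with $e=2$, which then makes the weight $w=q'+1=q^{ak}+1\ne q^k$ for $a\ge 2$, a contradiction; handling the boundary cases and the precise exponent bookkeeping for each row is where the calculation becomes delicate and must be done separately per row rather than by a single uniform inequality.
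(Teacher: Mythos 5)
Your high-level strategy --- an exhaustive case-by-case comparison of the triple $(n,w,\lambda)$ against each row of Table \ref{table:asynoptimalOOC} --- is exactly the paper's strategy, but your execution plan has concrete problems. First, your formula $\lambda=q^{k-(m-2)/2}$ comes from taking $d=m-2$; that is only correct for the two-orbit constructions with $m=2k$. For the rows of Table \ref{table:constructionscyclic} with $m=ak$, $a>2$, the minimum distance is $d=2k-2$ (the codes are multi-Sidon spaces), so $\lambda=q^{k-d/2}=q$ in every row; your formula would give $\lambda<1$ there. The fact that $\lambda$ equals $q$ exactly (not merely ``a power of $q$'') is what the paper exploits at the very first step of each case.

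Second, the order in which you match coordinates matters, and your choices make the argument harder than it needs to be, in one place fatally so. The paper always matches the last two coordinates first: $\lambda=q$ pins down the competing family's correlation parameter, $w=q^k$ then pins down its base (e.g.\ $\overline{q}=q^k+1$ or $\overline{q}=3^k+2$), and the contradiction falls out of the length equation by a one-line reduction modulo $q$ or modulo $\lambda^k$ (e.g.\ $-1\equiv\lambda+2 \pmod{\lambda^k}$ for the $\frac{\overline{q}^{k'(\lambda+2)}-1}{\overline{q}^{k'}-1}$ family). Your plan to ``match $n$ first'' for the $\frac{q'^{e}-1}{q'-1}$-type rows and deduce $q'-1\mid 1$ is unjustified: from $q^{ak}-1=\frac{q'^{e}-1}{q'-1}$ alone no such divisibility follows, and these are precisely the rows you flag as delicate. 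Likewise the appeal to Catalan/Mih\u{a}ilescu is unnecessary; once $w$ and $\lambda$ are matched first, every case closes with elementary congruences or a degree comparison. So the skeleton is right, but you would need to correct the $\lambda$ bookkeeping and reorganize the coordinate-matching before the proof goes through.
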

\begin{proof}
We prove the newness of the parameters of OOCs in Table \ref{table:newconstruction} by comparing them with the ones in Table \ref{table:asynoptimalOOC}. All the OOCs in Table \ref{table:newconstruction} have parameters of the form $(q^m-1,q^k,q)$, for some $q$ prime power and $m \geq 2k \geq 6$ positive integers. 
\begin{enumerate}
    \item  Assume that $(q^m-1,q^k,q)=\left(\overline{q}^2-1,\overline{q}-2,3\right)$. By the equality of the last two parameters, we get $q=3$ and $\overline{q}=3^k+2$. Then by the equality of the first parameter, we get $3^m-1=(2+3^k)^2-1$ and by reducing modulo 3, we get a contradiction.
    
    \item  Assume that $(q^m-1,q^k,q)=\left(\frac{\overline{q}^5-1}{\overline{q}-1},\overline{q},3\right)$. By the equality of the last two parameters, we get $q=3$ and $\overline{q}=3^k$. Then by the equality of the first parameter, we get $3^m-1=\frac{3^{5k}-1}{3^k-1}$. By reducing modulo 3, we get a contradiction.
    
    \item  Assume that $(q^m-1,q^k,q)=\left(\frac{\overline{q}^{k'(\lambda+2)}-1}{\overline{q}^{k'}-1},\overline{q}-1,\lambda\right)$. By the equality of the last two parameters, we get $q=\lambda$ and $\overline{q}=\lambda^k+1$. Then by the equality of the first parameter, we get $\lambda^m-1=\frac{(\lambda^k+1)^{k'(\lambda+2)}-1}{(\lambda^k+1)^{k'}-1}$. By reducing modulo $\lambda^k$, we get $-1=\lambda+2 \ \mod \lambda^k$, a contradiction.
    
    \item  Assume that $(q^m-1,q^k,q)=\left(\frac{\overline{q}^{(h+1)}-1}{\overline{q}-1},\overline{q}+1,\lambda\right)$. Using the same argument of the previous case, we obtain again a contradiction.
    
    \item  Assume that $(q^m-1,q^k,q)=\left(\overline{q}^h-1,\overline{q}-\lambda+3,\lambda\right)$. By the equality of the last two parameters, we get $q=\lambda$ and $\overline{q}=\lambda^k+\lambda+3$. Then by the equality of the first parameters, we get $\lambda^m-1=(\lambda^k+\lambda+3)^h-1$. Since $\lambda =q>2$, by comparing the maximum degree of both side of the previous equation, we get that $h=2$. A contradiction to the fact that $h>\lambda \geq 3$.
    
    \item  Assume that $(q^m-1,q^k,q)=\left(\overline{p}h,h,\lambda\right)$. By the equality of the last two parameters, we get $q=\lambda$ and $h=\lambda^k$. Then by the equality of the first parameter, we get $\lambda^m-1=\overline{p}\lambda^k$. By reducing modulo $\lambda$, we get a contradiction.
    
    \item  Assume that $(q^m-1,q^k,q)=\left((\overline{q}-1)p,(p-\lambda),\lambda \right)$. One can argue as in the previous case.
    
    \item  Assume that $(q^m-1,q^k,q)=\left(h(\overline{q}+1),h,2\lambda\right)$. One can argue as in Case $(6)$.
    
    \item  Assume that $(q^m-1,q^k,q)=\left(\frac{\overline{q}^{\lambda-1}-1}{\overline{q}-1},\overline{q}+1,\lambda\right)$. One can argue as in Case $(3)$. 
    
    \item  Assume that $(q^m-1,q^k,q)=\left(\overline{q}^a-1,\overline{q}^h,\overline{q}^{h-1}\right)$.  By the equality of the last two parameters, we get $k=2$, a contradiction.
    
     \item  Assume that $(q^m-1,q^k,q)=\left(\overline{q}-1,f,\lambda\right)$.  By the equality of the three parameters, we get $\lambda=q$, $f=q^k$ and $\overline{q}=q^m$. Since $f \mid (\overline{q}-1)$, we have $q^k \mid (q^m-1)$, a contradiction.
     
\end{enumerate}
\end{proof}

\begin{table}[htp]
\tabcolsep=0.2 mm
\begin{tabular}{|c|c|c|c|}
\hline
 & \mbox{Parameters} & \mbox{Conditions} & Size \\ 
\hline
1 & $(q^{2k}-1,q^k-1,q)$ & $q>2$ & $r_1\frac{q^k-1}{q-1}$ \\
\hline
2 & $(q^{ak}-1,q^k-1,q)$ & $a>2$ & $r_2 \frac{q^{k(a-1)}-1}{q-1}$ \\
\hline 
3 & $(q^{ak}-1,q^k-1,q)$ & $a>2$ & $r_3 \frac{q^{k(a-1)}-1}{q-1}$ \\
\hline 
4 & $(q^m-1,q^k-1,q)$ & $m \geq 6$ \mbox{ and }$k \leq \left\lfloor \frac{m-2}{4}\right\rfloor$  & $\frac{q^{m-k}-1}{q-1}$ \\
\hline
5 & $(q^m-1,q^k-1,q)$ & $m \geq \binom{k+1}{2}$ & $\frac{q^{m-k}-1}{q-1}$ \\
\hline 
6 & $(q^m-1,q^{2k}-1,q)$  & $m =a{k/2} $ with $a >6$ & $\frac{q^{m-2k}-1}{q-1}$ \\
\hline
7 & $(q^m-1,q^k-1,q)$ & $m \mid (q-1)$ and $k \leq \left\lfloor \frac{m+1}{4}\right\rfloor$  & $\frac{q^{m-k}-1}{q-1}$ \\ 
\hline
8 & $(q^m-1,q^k-1,q)$ & $m=ak(k-1)+1,$ $ k-1$ power of $p$ and $a>1$   & $\frac{q^{m-k}-1}{q-1}$ \\ 
\hline
9 & $(q^m-1,q^k-1,q)$ & $m=a(k^2-1),$ $ k$ and $q$ power of $2$ and $a>1$  & $\frac{q^{m-k}-1}{q-1}$ \\ 
\hline 
\end{tabular}
\caption{New constructions ($q$ is a power of a prime number $p$)}
\label{table:newconstruction}
\end{table}

\section*{Acknowledgements}

The research of the last two authors was partially supported by the project COMBINE of ``VALERE: VAnviteLli pEr la RicErca" of the University of Campania ``Luigi Vanvitelli'' and was partially supported by the INdAM - GNSAGA Project \emph{Tensors over finite fields and their applications}, number E53C23001670001.

\bibliographystyle{amsplain}
\bibliography{biblio}
\end{document}